\documentclass[copyright,creativecommons]{eptcs}

\usepackage{amsmath}
\usepackage{amsthm}
\usepackage{amssymb}
\usepackage{verbatim} 
\usepackage{stmaryrd}
\usepackage{paralist}
\usepackage{cite}
\usepackage{appendix}
\usepackage{hyperref}
\usepackage{subfig}
\usepackage{array}
\usepackage{textcomp}
\usepackage{txfonts}
\usepackage{latexsym}
\usepackage{yfonts}
\usepackage{empheq}

\allowdisplaybreaks

\newif\ifdraft\draftfalse

\hyphenation{PGSOS}

\newcommand{\bisim}{\sim}
\newcommand{\bbisim}{\sim_b}
\newcommand{\cbisim}{\sim_c}
\newcommand{\abisim}{\sim_a}

\newcommand{\relR}{\mathrel{\textsf{R}}}

\newcommand{\closed}[1]{{#1}\text{-closed}}

\newcommand{\T}{\textsf{T}}
\newcommand{\N}{\mathbb{N}}

\newcommand{\Q}{\mathbb{Q}}

\newcommand{\ddedrule}[2]{\frac{\displaystyle #1}{\displaystyle #2}}
\newcommand{\dedrule}[2]{\frac{#1}{#2}}
\newcommand{\trans}[1][]{\xrightarrow{\, {#1} \, }}
\newcommand{\ntrans}[1][]{\mathrel{{\trans[#1]}\makebox[0em][r]{$\not$\hspace{2ex}}}{\!}}

\newcommand{\otrans}[1][]{\overset{#1}{\rightsquigarrow}}

\newcommand{\TS}{\ensuremath{s}}
\newcommand{\DTS}{\ensuremath{d}}

\newcommand{\Ssignature}{\Sigma_\TS}
\newcommand{\Dsignature}{\Sigma_\DTS}

\newcommand{\te}{{\textswab{t}}}

\newcommand{\gtgeq}{\trianglerighteq}

\newcommand{\strans}[1][]{\stackrel{#1}{\longrightarrow}}

\newcommand{\tuple}[1]{\langle{#1}\rangle}

\newcommand{\rank}{\mathop{\sf rk}}
\newcommand{\arity}{\mathop{\sf ar}}

\newcommand{\openT}{\mathbb{T}}

\newcommand{\openTerms}{\openT(\Sigma)}
\newcommand{\openSTerms}{\openT(\Sigma_{\TS})}
\newcommand{\closedSTerms}{\T(\Sigma_{\TS})}

\newcommand{\openDTerms}{\openT(\Sigma_{\DTS})}
\newcommand{\closedDTerms}{\T(\Sigma_{\DTS})}
\newcommand{\closedTerms}{\T(\Sigma)}

\newcommand{\pprem}[1]{\textrm{pprem}(#1)}
\newcommand{\nprem}[1]{\textrm{nprem}(#1)}
\newcommand{\qprem}[1]{\textrm{qprem}(#1)}
\newcommand{\prem}[1]{\textrm{prem}(#1)}
\newcommand{\conc}[1]{\textrm{conc}(#1)}

\newcommand{\Red}[2]{\textrm{Red}}

\newcommand{\DVar}{\mathcal{V}\!_d}
\newcommand{\Var}{\mathcal{V}}

\newcommand{\Tr}{\textsf{Tr}}
\newcommand{\CerTr}{{\sf CT}}
\newcommand{\PosTr}{{\sf PT}}

\newcommand{\support}{\mathsf{supp}}

\newcommand{\proj}{\mbox{\small$\mathsf{\Pi}$}}

\newcommand{\nullproc}{0}

\newcommand{\Act}{A}

\newcommand{\ntmuft}{\ensuremath{\mathit{nt}\mkern-1.25mu \mu\mkern-1.0mu \mathit{f}\mkern-1.5mu \theta}}

\newcommand{\ntmuxt}{\ensuremath{\mathit{nt}\mkern-1.5mu \mu\mkern-0.75mu \mathit{x}\mkern-0.5mu \theta}}

\newcommand{\ntmufxt}{\ensuremath{\mathit{nt}\mkern-1.25mu \mu\mkern-1.0mu \mathit{f}\mkern-1.5mu \theta\mkern-1.75mu /\mkern-1.75mu \mathit{nt}\mkern-1.5mu \mu\mkern-0.75mu \mathit{x}\mkern-0.5mu \theta}}

\newcommand{\Diag}{\mathop{\textsf{Diag}}}
\newcommand{\diag}[1]{\textsf{Diag}\{#1\}}

\newcommand{\limp}{\Rightarrow}

\newcommand{\dist}[1]{\boldsymbol{#1}}

\newcommand{\sem}[2]{\llbracket{#1}\rrbracket_{#2}}

\newcommand{\TVar}{\mathcal{V}}

\newcommand{\proves}{\vdash}

\newcommand{\tm}{{\tilde{m}}}
\newcommand{\tM}{{\tilde{M}}}

\newcommand{\logic}{\mathcal{L}}
\newcommand{\blogic}{\logic_{b}}
\newcommand{\clogic}{\logic_{c}}
\newcommand{\alogic}{\logic_{a}}
\newcommand{\ologic}{\logic_{o}}
\newcommand{\xlogic}{\logic_{\chi}}

\newcommand{\pand}{\bigsqcap}

\newcommand{\ctop}{\text{(i)}}
\newcommand{\cland}{\text{(iv)}}
\newcommand{\cneg}{\text{(v)}}
\newcommand{\ca}{\text{(ii)}}
\newcommand{\cac}{\text{(iii)}}
\newcommand{\cmeasure}{\text{(vi)}}
\newcommand{\cpand}{\text{(vii)}}

\newcommand{\dia}[2]{\langle #1 \rangle {#2}}
\newcommand{\diac}[2]{\langle #1 \rangle_c {#2}}

\newcommand{\remarkPRD}[1]{}
\newcommand{\remarkMDL}[1]{}

\newcommand{\hrmkPRD}[1]{}

\newtheorem{definition}{Definition}
\newtheorem{theorem}{Theorem}
\newtheorem*{theorem*}{Theorem}
\newtheorem{lemma}{Lemma}
\newtheorem*{lemma*}{Lemma}

\title{SOS rule formats for \\ convex and abstract probabilistic bisimulations%
  \thanks{
    Supported by ANPCyT project PICT-2012-1823, SeCyT-UNC
    program 05/BP12 and their related projects, and EU 7FP
    grant agreement 295261 (MEALS).}%
}

\author{Pedro R. D'Argenio \qquad\qquad Matias David Lee
\institute{
  FaMAF, Universidad Nacional de C\'ordoba -- CONICET. \\
  Ciudad Universitaria, X5000HUA -- C\'ordoba, Argentina.}
\email{dargenio@famaf.unc.edu.ar \quad\qquad lee@famaf.unc.edu.ar}
\and
Daniel Gebler
\institute{
        Department of Computer Science, VU University Amsterdam,\\
        De Boelelaan 1081a, NL-1081~HV~Amsterdam, The Netherlands}
\email{e.d.gebler@vu.nl}
}

\begin{document}
 
\maketitle

\begin{abstract}
  Probabilistic transition system specifications (PTSSs) in the
  \ntmufxt\ format provide structural operational semantics for
  Segala-type systems that exhibit both probabilistic and
  nondeterministic behavior and guarantee that bisimilarity is a
  congruence for all operator defined in such format.
  Starting from the \ntmufxt, we obtain restricted formats that
  guarantee that three coarser bisimulation equivalences are
  congruences.
  We focus on
  \begin{inparaenum}[(i)]
  \item%
    Segala's variant of bisimulation that considers combined
    transitions, which we call here \emph{convex bisimulation};
  \item%
    the bisimulation equivalence resulting from considering Park \&
    Milner's bisimulation on the usual stripped probabilistic
    transition system (translated into a labelled transition system),
    which we call here \emph{probability obliterated bisimulation};
    and
  \item%
    a \emph{probability abstracted bisimulation}, which, like
    bisimulation, preserves the structure of the distributions but
    instead, it ignores the probability values.
  \end{inparaenum}
  In addition, we compare these bisimulation equivalences and provide a
  logic characterization for each of them.
\end{abstract}


\section{Introduction}

Structural operational semantics (SOS for short)~\cite{Plo04} is a
powerful tool to provide semantics to programming languages.  In SOS,
process behavior is described using transition systems and the
behavior of a composite process is given in terms of the behavior of
its components.
SOS has been formalized using an algebraic framework as
\emph{Transition Systems Specifications
  (TSS)}~\cite[etc.]{GV92,BIM95,Gro93,BG96,MRG07}.
Basically, a TSS contains a signature, a set of actions or labels, and
a set of rules. The signature defines the terms in the
language. The set of actions represents all possible
activities that a process (i.e., a term over the signature) can
perform. The rules define how a process should behave (i.e., perform
certain activities) in terms of the behavior of its subprocesses, that
is, the rules define compositionally the transition system associated
to each term of the language.
A particular focus of these formalizations was to provide a
meta-theory that ensures a diversity of semantic properties by 
simple inspection on the form of the rules.
(See~\cite{AFV01,MRG07,AFV01:curretn-trends} for overviews.)
One of such kind of properties is to ensure that a given equivalence
relation is a congruence for all operators whose semantics is defined
in a TSS whose rules complies to a particular format.  These so called
\emph{congruence theorems} have been proved for a variety of
equivalences in the non-probabilistic
case~\cite[etc.]{GV92,BIM95,Gro93}.

The introduction of probabilistic process algebras motivated the need
for a theory of structural operational semantics to define
\emph{probabilistic} transition systems.
Few earlier results appeared in this
direction~\cite{Bar02,Bar04,LT05,LT09} presenting congruence theorems
for Larsen \& Skou bisimulation equivalence~\cite{LS91}.
Most of these formats have complicated restrictions that extend to sets
of rules due to the fact that they considered transitions labeled
both with an action and a probability value.
By using a more modern view of probabilistic transition systems (where
the target of the transition is a probability distribution on states)
we manage to obtained the most general format for bisimulation
equivalence, which we called \ntmufxt, following the nomenclature
of~\cite{GV92,Gro93}.

Starting from the \ntmufxt\ format, in this paper we define formats to
guarantee that three coarser versions of bisimulation equivalence are
congruences for all operator definable in the respective format.
The first relation we focus on is Segala's variant of bisimulation
that considers combined transitions, here called \emph{convex
  bisimulation}~\cite{Seg95a}.
The second relation we explore originates here and we call it
\emph{probability abstracted bisimulation}.  Like bisimulation and
unlike convex bisimulation, it preserves the structure of the
distributions of each transition, but instead, it ignores the
probability values.  This relation preserves the fairness introduced by
the probability distributions.
Finally, we study the bisimulation equivalence resulting from
considering Park \& Milner's bisimulation \cite{Mil89} on the usual
stripped probabilistic transition system (translated into a labeled
transition system).  Here we call it \emph{probability obliterated
  bisimulation}.  This is the usual way to abstract probabilities, but
it has the drawback that it breaks the basic fairness provided by
probabilistic choices.

Apart from presenting congruence theorems for all previously mentioned
bisimulation equivalences, we briefly study alternative definitions of
these bisimulations, compare them with each other, and provide
logical characterizations, which are particularly new here for
probability abstracted and probability obliterated bisimulation
equivalences.

The paper is organized as follows. 
Sec.~\ref{sec:preliminaries} recalls the type of algebraic structure
and Sec.~\ref{sec:ptss} provides the basic notions and results of
probabilistic transition system specifications (PTSS).
Sec.~\ref{sec:semantic_relations} presents the different bisimulation
equivalences and a brief study of them, including their logical
characterizations.
The study of all the PTSS formats and the respective congruence
theorems is given in Sec.~\ref{sec:formats}.
The paper concludes in Sec~\ref{sec:conclusion}.


\section{Preliminaries}\label{sec:preliminaries}%

Let $S=\{\TS,\DTS\}$ be a set denoting two sorts.
Elements of sort $\TS \in S$ are intended to represent states in the
transition system, while elements of sort $\DTS \in S$ will represent
distributions over states.
We let $\sigma$ range over $S$. 
An \emph{$S$-sorted signature} is a structure $(F, \arity)$, where 
\begin{inparaenum}[(i)]
	\item $F$ is a set of \emph{function names}, and
	\item $\arity \colon F \to (S^* \times S)$ is the \emph{arity function}.
\end{inparaenum}
The rank of $f \in F$ is the number of arguments of $f$, defined by
$\rank(f) = n$ if $\arity(f)= \sigma_1 \ldots \sigma_n \to \sigma$.
(We write ``$\sigma_1 \ldots \sigma_n \to \sigma$'' instead of
``$(\sigma_1 \ldots \sigma_n,\sigma)$'' to highlight that function $f$ maps to sort $\sigma$.)  Function $f$ is
a \emph{constant} if $\rank(f) = 0$. To simplify the presentation we
will write an $S$-sorted signature $(F, \arity)$ as a pair of disjoint
signatures $(\Ssignature,\Dsignature)$ where $\Ssignature$ is the set
of operations that map to $\TS$ and $\Dsignature$ is the set of
operations that map to $\DTS$.
Let $\TVar$ and $\DVar$ be two infinite sets of $S$-sorted variables where $\TVar,\DVar,F$ are all mutually disjoint. We use $x,y,z$ (with possible sub- or super-scripts) to range over $\TVar$, $\mu,\nu$ to range over $\DVar$ and $\zeta$ to range over $\TVar \cup \DVar$. 

\begin{definition} \label{def:state_distribution_terms}
  Let $\Ssignature$ and $\Dsignature$ be two signatures as before and
  let $V \subseteq \TVar$ and $D \subseteq \DVar$.  We simultaneously
  define the sets of \emph{state terms} $T(\Ssignature,V,D)$
  and \emph{distribution terms} $T(\Dsignature,V,D)$ as the smallest
  sets satisfying:
\begin{inparaenum}[(i)]
  \item%
    $V \subseteq T(\Ssignature,V,D)$;
  \item%
    $D \subseteq T(\Dsignature,V,D)$;
  \item%
    $f(\xi_1, \cdots, \xi_{\rank(f)}) \in T(\Sigma_\sigma,V,D)$, if
    $\arity(f) = \sigma_1\ldots\sigma_n \to \sigma$ and
    $\xi_i \in T(\Sigma_{\sigma_i},V,D)$.
\end{inparaenum}
\end{definition}

We let $\openTerms = T(\Ssignature,\TVar,\DVar) \cup
T(\Dsignature,\TVar,\DVar)$ denote the set of all \emph{open terms}
and distinguish the sets $\openSTerms = T(\Ssignature,\TVar,\DVar)$ of
\emph{open state terms} and $\openDTerms = T(\Dsignature,\TVar,\DVar)$
of \emph{open distribution terms}.
Similarly, we let $\closedTerms = T(\Ssignature,\emptyset,\emptyset)
\cup T(\Dsignature,\emptyset,\emptyset)$ denote the set of all
\emph{closed terms} and distinguish the sets $\closedSTerms =
T(\Ssignature,\emptyset,\emptyset)$ of \emph{closed state terms} and
$\closedDTerms = T(\Dsignature,\emptyset,\emptyset)$ of \emph{closed
  distribution terms}.
We let $t$, $t'$, $t_1$,\ldots\ range over state terms, $\theta$,
$\theta'$, $\theta_1$,\ldots\ range over distribution terms, and $\xi$,
$\xi'$, $\xi_1$,\ldots\ range over any kind of terms.
With $\Var(\xi) \subseteq \TVar \cup \DVar$ we denote the set of variables
occurring in term $\xi$.

Let $\Delta(\closedSTerms)$ denote the set of all (discrete) probability distributions on $\closedSTerms$. 
We let $\pi$ range over $\Delta(\closedSTerms)$.
For each $t \in \closedSTerms$, let $\delta_t \in \Delta(\closedSTerms)$ denote the \emph{Dirac distribution}, i.e., $\delta_t(t)=1$ and  $\delta_t(t')=0$ if $t$ and $t'$ are not syntactically equal.  
For $X \subseteq \closedSTerms$ we define $\pi(X)=\sum_{t\in X}\pi(t)$. The convex combination $\sum_{i \in I}p_i \pi_i$ of a family $\{\pi_i\}_{i \in I}$ of probability distributions with $p_i \in (0,1]$ and $\sum_{i \in I} p_i = 1$ is defined by $(\sum_{i \in I}p_i \pi_i)(t) = \sum_{i \in I} (p_i \pi_i(t))$. 

The type of signatures we consider has a particular construction.  We
start from a signature $\Ssignature$ of functions mapping into sort
$\TS$ and construct the signature $\Dsignature$ of functions mapping
into $\DTS$ as follows.
For each $f \in F_\TS$ we include a function symbol
$\dist{f} \in F_\DTS$ with $\arity(\dist{f}) = \DTS\ldots\DTS\to\DTS$
and $\rank(\dist{f}) = \rank(f)$.  We call $\dist{f}$
the \emph{probabilistic lifting} of $f$.  (We use boldface fonts to
indicate that a function in $\Dsignature$ is the probabilistic lifting
of another in $\Ssignature$.)
Moreover $\Dsignature$ may include any of the following additional
operators:
\begin{inparaenum}[(i)]
\item%
  $\delta$ with arity $\arity(\delta) = \TS\to\DTS$, and
\item%
  $\bigoplus_{i\in I}[p_i]\textvisiblespace$ with
  $I$ being a finite or countable infinite index set,
  $\sum_{i \in I} p_i = 1$,
  $p_i\in(0,1]$ for all $i\in I$, and
  $\arity\left(\bigoplus_{i\in I}[p_i]\textvisiblespace\right) =
  \DTS^{|I|} \to \DTS$.
\end{inparaenum}
Notice that if $I$ is countably infinite,
$\bigoplus_{i\in I}[p_i]\textvisiblespace$ is an infinitary operator.

Operators $\delta$ and $\bigoplus_{i\in I}[p_i]\textvisiblespace$ are
used to construct discrete probability functions of countable support:
$\delta(t)$ is interpreted as a distribution that assigns probability
1 to the state term $t$ and probability 0 to any other term $t'$ (syntactically)
different from $t$, and
$\bigoplus_{i\in I}[p_i]\theta_i$ represents a distribution that weights
with $p_i$ the distribution represented by the term $\theta_i$.
Moreover, a probabilistically lifted operator $\dist{f}$ is
interpreted by properly lifting the probabilities of the operands to
terms composed with the operator $f$.

Formally, the algebra associated with a probabilistically lifted
signature $\Sigma = (\Ssignature,\Dsignature)$ is defined as follows.
For sort $\TS$, it is the freely generated algebraic structure
$\closedSTerms$.  
For sort $\DTS$, it is defined by the carrier
$\Delta(\closedSTerms)$ and the following interpretation:
  $\sem{\delta(t)}{} = \delta_t$ for all $t\in \closedSTerms$,
  $\sem{\bigoplus_{i\in I}[p_i]\theta_i}{} =
     \sum_{i\in I}p_i \sem{\theta_i}{}$
    for $\{\theta_i\mid i\in I\} \subseteq \closedDTerms$,
  $\sem{\dist{f}(\theta_1,\ldots,\theta_{\rank(f)})}{}(f(\xi_1,\ldots,\xi_{\rank(f)})) = \prod_{\sigma_i=s} \sem{\theta_i}{}(\xi_i)$ if for all $j$ s.t.\ $\sigma_j = d$, $\theta_j = \xi_j$, and
  $\sem{\dist{f}(\theta_1,\ldots,\theta_{\rank(f)})}{}(f(\xi_1,\ldots,\xi_{\rank(f)})) = 0$ otherwise.
Here it is assumed that $\prod\emptyset = 1$.
Notice that in the semantics of a lifted function $\dist{f}$, the big
product only considers the distributions related to the \TS-sorted
positions in $f$, while the distribution terms corresponding to the
\DTS-sorted positions in $f$ should match exactly to the
parameters of $\dist{f}$.

A \emph{substitution} $\rho$ is a map
$\TVar \cup \DVar \to \openTerms$ such that 
$\rho(x) \in \openSTerms$, for all $x\in\TVar$, and
$\rho(\mu) \in \openDTerms$, for all $\mu\in\DVar$.
A substitution is closed if it maps each
variable to a closed term. A substitution extends to a mapping from
terms to terms as usual.

Finally, we remark a general property of distribution terms: let
$f\in\Ssignature$ with $\arity(f) = \sigma_1 \ldots \sigma_n \to \TS$,
and let $\sigma_j = \TS$; then $\dist{f} \in \Dsignature$ is distributive
w.r.t.\ $\oplus$ in the position $j$, i.e.\
 $\sem{\rho(\dist{f}(\ldots,\xi_{j-1},\bigoplus_{i\in I}[p_i]\theta_i,\xi_{j+1},\ldots))}{} =
\sem{\rho(\bigoplus_{i\in I}[p_i]\dist{f}(\ldots,\xi_{j-1},\theta_i,\xi_{j+1},\ldots))}{}$
for any closed substitution $\rho$.  The proof follows
from the definition of $\sem{\textvisiblespace}{}$.
However, notice that $\dist{f}$ \emph{does not} distribute
w.r.t.\ $\oplus$ in a position $k$ such that $\sigma_k=\DTS$.


\section{Probabilistic Transition System Specifications}\label{sec:ptss}%

A (probabilistic) transition relation prescribes which possible
activity can be performed by a term in a signature.  Such activity is
described by the label of the action and a probability distribution on
terms that indicates the probability to reach a particular new term.
We will follow the probabilistic automata style of probabilistic
transitions~\cite{Seg95a} which are a generalization of the so-called reactive model~\cite{LS91}.

\begin{definition}[PTS]%
  A \emph{probabilistic labeled transition system} (PTS) is a triple
  $(\closedSTerms,\Act,{\trans})$, where $\Sigma =
  (\Ssignature,\Dsignature)$ is a probabilistically lifted signature,
  $\Act$ is a countable set of actions,
  and ${\trans} \subseteq \closedSTerms\times\Act\times\Delta(\closedSTerms)$,
  is a transition relation.
We write $t\trans[a]\pi$ for $(t,a,\pi)\in{\trans}$.
\end{definition}

Transition relations are usually defined by means of structured
operational semantics in Plotkin's style~\cite{Plo04}.
For PTS, algebraic characterizations of this style were provided in~\cite{DL12, LGD12,DGL15a} 
where the term \emph{probabilistic transition system specification} was used and which we adopt in
our paper.

\begin{definition}[PTSS]\label{def:ptss}%
  A \emph{probabilistic transition system specification} (PTSS) is a
  triple $P = (\Sigma, \Act, R)$ where
  $\Sigma$ is a probabilistically lifted signature,
  $\Act$ is a set of labels, and $R$ is a set of rules of the form:
  \[
  \ddedrule{ \{t_k \trans[a_k] \theta_k \mid k\in K \}\cup
              \{t_l \ntrans[b_l] \mid l \in L\} \cup
              \{\theta_j (T_j) \bowtie_j q_j \mid j \in J\} } %
            { t \trans[a] \theta}
  \]
  where
  $K, L, J$ are index sets, 
  $t, t_k, t_l \in \openSTerms$, $a, a_k, b_l \in \Act$, 
  $T_j\subseteq \closedSTerms$,
  ${\bowtie_j} \in \{{>},{\geq}, <, \leq \}$, $q_j\in[0,1]$
  and $\theta_j, \theta_k, \theta \in \openDTerms$.
\end{definition}

Expressions of the form $t \trans[a] \theta$, $t \ntrans[a]$, and
$\theta (T) \bowtie p$ are called \emph{positive literal},
\emph{negative literal}, and \emph{quantitative literal},
respectively.
For any rule $r \in R$, literals above the line are called
\emph{premises}, notation $\prem{r}$; the literal below the line is
called \emph{conclusion}, notation $\conc{r}$.
We denote with $\pprem{r}$, $\nprem{r}$, and $\qprem{r}$ the sets of
positive, negative, and quantitative premises of the rule $r$,
respectively.
In general, we allow the sets of positive, negative, and quantitative premises to be infinite.

Substitutions provide instances to the rules of a PTSS that, together
with some appropriate machinery, allow us to define probabilistic
transition relations.  Given a substitution $\rho$, 
it extends to literals as follows:
$\rho(t \ntrans[a]) = \rho(t) \ntrans[a]$,
$\rho(\theta(T) \ {\bowtie} \ p) = \rho(\theta)(\rho(T)) \ {\bowtie} \ p$
(where $\rho(T)= \{\rho(t) \mid t \in T\}$),
and
$\rho(t \trans[a] \theta) = \rho(t) \trans[a] \rho(\theta)$.

We say that $r'$ is a (closed) instance of a rule $r$ if there is a
(closed) substitution $\rho$ so that $r'=\rho(r)$.
We say that $\rho$ is a \emph{proper substitution of $r$} if for all
quantitative premises $\theta(T) \bowtie p$ of $r$ and all $t\in T$,
$\sem{\rho(\theta)}{}(\rho(t)) > 0$ holds.  
We use only this kind of substitution in the paper. 

In the rest of the paper, we will deal with models as \emph{symbolic}
transition relations in the set
$\closedSTerms\times\Act\times\closedDTerms$ rather than the
\emph{concrete} transition relations in
$\closedSTerms\times\Act\times\Delta(\closedSTerms)$ required by a
PTS.
Hence we will mostly refer with the term ``transition relation'' to
the symbolic transition relation.
In any case, a symbolic transition relation induces always a unique
concrete transition relation by interpreting every target distribution
term as the distribution it defines; that is, the symbolic transition
$t\trans[a]\theta$ is interpreted as the concrete transition
$t\trans[a]\sem{\theta}{}$.  If the symbolic transition relation turns out
to be a model of a PTSS $P$, we say that the induced concrete
transition relation defines a PTS associated to $P$.

To define an appropriate notion of model we consider \emph{3-valued models}.  
A 3-valued model partitions the set $\closedSTerms\times\Act\times\closedDTerms$
in three sets containing, respectively, the transition that are known
to hold, that are known not to hold, and those whose validity is
unknown.
Thus, a 3-valued model can be presented as a pair
$\tuple{\CerTr,\PosTr}$ of transition relations
$\CerTr,\PosTr\subseteq{\closedSTerms\times\Act\times\closedDTerms}$,
with $\CerTr\subseteq\PosTr$, where $\CerTr$ is the set of transitions
that \emph{certainly} hold and $\PosTr$ is the set of transitions that
\emph{possibly} hold.  So, transitions in $\PosTr\setminus\CerTr$ are
those whose validity is unknown and transitions in
$(\closedSTerms\times\Act\times\closedDTerms)\setminus\PosTr$ are
those that certainly do not hold.
A 3-valued model $\tuple{\CerTr,\PosTr}$ that is justifiably compatible
with the proof system defined by a PTSS $P$ is said to be \emph{stable}
for $P$.
(See Def.~\ref{def:3-valued-stable-model}.)

Before formally defining the notions of proof and 3-valued stable
model we introduce some notation.
Given a transition relation
$\Tr \subseteq \closedSTerms\times\Act\times\closedDTerms$,
$t\trans[a]\theta$ \emph{holds in} $\Tr$, notation $\Tr\models
{t\trans[a]\theta}$, if ${t\trans[a]\theta} \in \Tr$;
$t\ntrans[a]$ \emph{holds in} $\Tr$, notation $\Tr\models
{t\ntrans[a]}$, if for all $\theta\in\closedDTerms$,
${t\trans[a]\theta} \notin \Tr$.
A closed quantitative constraint $\theta(T)\bowtie p$ \emph{holds in}
$\Tr$, notation $\Tr\models \theta(T)\bowtie p$, if
$\sem{\theta}{}(T)\bowtie p$.  Notice that the satisfaction of a
quantitative constraint does not depend on the transition relation. We
nonetheless use this last notation as it turns out to be convenient.
Given a set of literals $H$, we write $\Tr \models H$
if for all $\phi \in H$, $\Tr \models \phi$.

\begin{definition}[Proof]\label{def:proof}
  Let $P = (\Sigma, \Act, R)$ be a PTSS.
  Let $\psi$ be a positive literal and let $H$ be a set of literals. 
  A \emph{proof} of a transition rule $\frac{H}{\psi}$ from $P$ is a
  well-founded, upwardly branching tree where each node is a literal
  such that:
  \begin{inparaenum}[(i)]
  \item%
    the root is $\psi$; and 
  \item%
    if $\chi$ is a node and $K$ is the set of nodes directly above
    $\chi$, then one of the following conditions holds:
    \begin{inparaenum}
    \item%
      $K = \emptyset$ and $\chi \in H$, or
    \item%
      $\chi = (\theta(T)\bowtie p)$ is a closed quantitative literal
      such that $\sem{\theta}{}(T)\bowtie p$ holds, or
    \item%
      $\frac{K}{\chi}$ is a valid substitution instance of a rule from
      $R$.
    \end{inparaenum}
  \end{inparaenum}

  $\dedrule{H}{\psi}$ \emph{is provable from} $P$, notation $P \proves
  \dedrule{H}{\psi}$, if there exists a proof of $\dedrule{H}{\psi}$
  from $P$.
\end{definition}

Before, we said that a 3-valued stable model $\tuple{\CerTr,\PosTr}$ for
a PTSS $P$ has to be \emph{justifiably compatible} with the proof
system defined by $P$.  By ``compatible'' we mean that
$\tuple{\CerTr,\PosTr}$ has to be consistent with every provable rule.
With ``justifiable'' we require that for each transition in $\CerTr$
and $\PosTr$ there is actually a proof that justifies it.
More precisely, we require that
\begin{inparaenum}[(a)]
\item%
  for every certain transition in $\CerTr$ there is a proof in $P$
  such that all negative hypotheses of the proof are known to hold
  (i.e. there is no possible transition in $\PosTr$ denying a negative
  hypothesis), and
\item%
  for every possible transition in $\PosTr$ there is a proof in $P$
  such that all negative hypotheses possibly hold (i.e. there is no
  certain transition in $\CerTr$ denying a negative hypothesis).
\end{inparaenum}
This is formally stated in the next definition.

\begin{definition}[3-valued stable model]\label{def:3-valued-stable-model}%
  Let $P = (\Sigma, \Act, R)$ be a PTSS.
  A tuple $\tuple{\CerTr,\PosTr}$ with
  $\CerTr\subseteq\PosTr\subseteq{\closedSTerms\times\Act\times\closedDTerms}$
  is a \emph{3-valued stable model} for $P$ if for every closed
  positive literal $\psi$,
  \begin{compactenum}[(a)]
  \item%
    $\psi\in\CerTr$ iff there is a set $N$ of closed negative literals
    such that $P \proves \dedrule{N}{\psi}$ and $\PosTr\models N$
  \item%
    $\psi\in\PosTr$ iff there is a set $N$ of closed negative literals
    such that $P \proves \dedrule{N}{\psi}$ and $\CerTr\models N$.
  \end{compactenum}
\end{definition}

The least 3-valued stable model of a PTSS can be
constructed using induction \cite{DGL15a, Fok00,FV98}.

\begin{lemma}\label{lem:inductive-construction-of-3v-least-model}
  Let $P$ be a PTSS.
  For each ordinal $\alpha$, define the pair
  $\tuple{\CerTr_\alpha,\PosTr_\alpha}$ as follows:
  \begin{itemize}
  \item%
    $\CerTr_0=\emptyset$ and
    $\PosTr_0=\closedSTerms\times\Act\times\closedDTerms$.
  \item%
    For every non-limit ordinal $\alpha>0$, define:
    \begin{align*}
      \CerTr_\alpha & \textstyle
      = \Big\{t\trans[a]\theta \mid
               \text{for some set } N \text{ of negative literals, }
               P\proves\dedrule{N}{t\strans[a]\theta}
               \text{ and }
               \PosTr_{\alpha-1}\models N \Big\}
      \\
      \PosTr_\alpha & \textstyle
      = \Big\{t\trans[a]\theta \mid
               \text{for some set } N \text{ of negative literals, }
               P\proves\dedrule{N}{t\strans[a]\theta}
               \text{ and }
               \CerTr_{\alpha-1}\models N \Big\}
    \end{align*}
  \item%
    For every limit ordinal $\alpha$, define
    $
      \CerTr_\alpha = \bigcup_{\beta<\alpha}\CerTr_\beta
    $ and
    $
      \PosTr_\alpha = \bigcap_{\beta<\alpha}\PosTr_\beta
    $.
  \end{itemize}
  Then:
\begin{inparaenum}[1.]
  \item\label{lem:inductive-construction-of-3v-least-model:inclusion}%
    if $\beta\leq\alpha$, $\CerTr_\beta\subseteq\CerTr_\alpha$ and
    $\PosTr_\beta\supseteq\PosTr_\alpha$, 
    and
  \item\label{lem:inductive-construction-of-3v-least-model:least}%
    there is an ordinal $\lambda$ such that
    $\CerTr_\lambda=\CerTr_{\lambda+1}$ and
    $\PosTr_\lambda=\PosTr_{\lambda+1}$.
    Moreover, $\tuple{\CerTr_\lambda,\PosTr_\lambda}$ is the least
    3-valued stable model for $P$.
\end{inparaenum}
\end{lemma}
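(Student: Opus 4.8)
The plan is to recognize the sequence $\tuple{\CerTr_\alpha,\PosTr_\alpha}$ as the transfinite iteration, from the bottom element, of a single monotone operator on a complete lattice, and then to invoke the constructive form of the Knaster--Tarski fixed point theorem. Concretely, write $U=\closedSTerms\times\Act\times\closedDTerms$ and consider the pairs $\tuple{C,Q}$ with $C,Q\subseteq U$, equipped with the \emph{information order} $\tuple{C_1,Q_1}\sqsubseteq\tuple{C_2,Q_2}$ iff $C_1\subseteq C_2$ and $Q_1\supseteq Q_2$. This is a complete lattice (a powerset lattice paired with its opposite); its least element is $\tuple{\emptyset,U}=\tuple{\CerTr_0,\PosTr_0}$, and the join of a $\sqsubseteq$-increasing chain $\{\tuple{C_\beta,Q_\beta}\}_\beta$ is $\tuple{\bigcup_\beta C_\beta,\bigcap_\beta Q_\beta}$ --- precisely the prescription at limit ordinals. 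Define $\Phi\tuple{C,Q}=\tuple{C',Q'}$ by $C'=\{\psi\mid \exists N,\ P\proves\dedrule{N}{\psi}\text{ and }Q\models N\}$ and $Q'=\{\psi\mid \exists N,\ P\proves\dedrule{N}{\psi}\text{ and }C\models N\}$, with $N$ ranging over sets of closed negative literals. Then $\tuple{\CerTr_\alpha,\PosTr_\alpha}=\Phi\tuple{\CerTr_{\alpha-1},\PosTr_{\alpha-1}}$ at successors, so the construction is the iteration $\Phi^\alpha\tuple{\emptyset,U}$.

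The crux is monotonicity of $\Phi$, which rests on two observations. First, the predicate $P\proves\dedrule{N}{\psi}$ depends only on the PTSS $P$: by Def.~\ref{def:proof} a proof refers to the rules, to membership in $N$, and to the truth of quantitative literals, but never to a transition relation, so across the iteration only the test ``$\models N$'' changes. Second, satisfaction of a set $N$ of negative literals is \emph{antitone} in the transition relation: if $\Tr_1\subseteq\Tr_2$ and $\Tr_2\models N$, then $\Tr_1\models N$, since a denying positive transition in the smaller relation would also lie in the larger. Now suppose $\tuple{C_1,Q_1}\sqsubseteq\tuple{C_2,Q_2}$. If $\psi\in C_1'$, some $N$ satisfies $P\proves\dedrule{N}{\psi}$ and $Q_1\models N$; since $Q_2\subseteq Q_1$, antitonicity gives $Q_2\models N$, so $\psi\in C_2'$. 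Symmetrically $\psi\in Q_2'$ gives $N$ with $C_2\models N$, and $C_1\subseteq C_2$ yields $C_1\models N$, so $\psi\in Q_1'$; hence $\Phi\tuple{C_1,Q_1}\sqsubseteq\Phi\tuple{C_2,Q_2}$. Part~\ref{lem:inductive-construction-of-3v-least-model:inclusion} then follows by routine transfinite induction: $\tuple{\emptyset,U}\sqsubseteq\Phi\tuple{\emptyset,U}$ trivially, monotonicity propagates this through successors, and at a limit every chain element lies below $\Phi$ of the join, hence so does the join.

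For part~\ref{lem:inductive-construction-of-3v-least-model:least}, stabilization is a cardinality argument: $\{\CerTr_\alpha\}$ is $\subseteq$-increasing and $\{\PosTr_\alpha\}$ is $\subseteq$-decreasing inside the fixed set $\powerset(U)$, so the $\sqsubseteq$-increasing chain $\tuple{\CerTr_\alpha,\PosTr_\alpha}$ cannot grow strictly through all ordinals; thus there is $\lambda$ with $\tuple{\CerTr_\lambda,\PosTr_\lambda}=\tuple{\CerTr_{\lambda+1},\PosTr_{\lambda+1}}=\Phi\tuple{\CerTr_\lambda,\PosTr_\lambda}$, a fixed point. Comparing the fixed-point equations $\CerTr_\lambda=C'$ and $\PosTr_\lambda=Q'$ with Def.~\ref{def:3-valued-stable-model}(a),(b) shows that fixed points of $\Phi$ are exactly the $3$-valued stable models; one also checks the invariant $\CerTr_\alpha\subseteq\PosTr_\alpha$ (preserved along the iteration via antitonicity and the inductively maintained inclusion), so the pair is a legitimate stable model. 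Finally, for any stable model $\tuple{C,Q}$ we have $\tuple{\emptyset,U}\sqsubseteq\tuple{C,Q}$ and $\Phi\tuple{C,Q}=\tuple{C,Q}$, so monotonicity and transfinite induction give $\Phi^\alpha\tuple{\emptyset,U}\sqsubseteq\tuple{C,Q}$ for all $\alpha$, whence $\CerTr_\lambda\subseteq C$ and $\PosTr_\lambda\supseteq Q$; thus $\tuple{\CerTr_\lambda,\PosTr_\lambda}$ is $\sqsubseteq$-least.

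I expect the main obstacle to be the bookkeeping around the two crossed antitone dependencies --- $\CerTr$ is driven by $\PosTr$ and vice versa --- and ensuring that the limit step of the monotonicity induction and the invariant $\CerTr_\alpha\subseteq\PosTr_\alpha$ are discharged without circularity. The fixed-point machinery itself is entirely standard; the one genuinely PTSS-specific ingredient is the remark that provability is independent of the transition relation being constructed, which is exactly what permits phrasing the whole construction as the iteration of a single monotone operator.
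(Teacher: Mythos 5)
Your proof is correct and is exactly the standard argument: the paper itself does not prove this lemma but defers to the cited references (\cite{DGL15a,Fok00,FV98}), which establish it by the same route you take --- viewing the construction as transfinite iteration of a single operator that is monotone in the information order $\tuple{C_1,Q_1}\sqsubseteq\tuple{C_2,Q_2}$ iff $C_1\subseteq C_2$ and $Q_1\supseteq Q_2$, using that provability of $\dedrule{N}{\psi}$ is independent of the transition relation and that satisfaction of negative literals is antitone. Your treatment of the two points most often glossed over (the limit step of the monotonicity induction and the invariant $\CerTr_\alpha\subseteq\PosTr_\alpha$, which is what makes the fixed point a genuine 3-valued stable model rather than just a fixed point of $\Phi$) is sound, so nothing is missing.
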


PTSSs with least 3-valued stable model that are also a 2-valued model
are particularly interesting, since this model is actually the only
3-valued stable model~\cite{BG96,vG04}.
  A PTSS $P$ is said to be \emph{complete} if its least 3-valued
  stable model $\tuple{\CerTr,\PosTr}$ satisfies that $\CerTr=\PosTr$
  (i.e., the model is also 2-valued).
We associate a probabilistic transition system to each
complete PTSS.
\begin{definition}\label{def:associated-model}%
  Let $P$ be a complete PTSS and let $\tuple{\Tr,\Tr}$ be its unique
  3-valued stable model.
  We say that $\Tr$ is the \emph{transition relation associated to}
  $P$.  We also define the \emph{PTS associated to} $P$ as the unique
  PTS $(\closedSTerms,\Act,{\trans})$ such that
  $t\trans[a]\pi$ if and only if
  $t\trans[a]\theta\in\Tr$ and $\sem{\theta}{}=\pi$ for some
  $\theta\in\closedDTerms$.
\end{definition}

The different examples that we give in the rest of the papers are in
terms of a basic probabilistic process algebra.  We introduce it here,
but address the reader to~\cite{DGL15a} for an example of a PTSS with
richer operators.
Signature $\Ssignature$ contains the constant $\nullproc$,
representing the stop process, for each action $a\in\Act$, a unary
probabilistic prefix operators $a.\textvisiblespace$ with arity
$\arity(a) = \DTS \to \TS$, and a binary operator $+$, the alternative
composition or sum, with arity $\arity({+}) = {\TS}{\TS} \to \TS$,
while $\Dsignature$ contains the respective lifted signature,
$\delta$, and all binary operators
$\bigoplus_{i\in\{1,2\}}[p_i]\theta_i$ which we denote by $\oplus_p$.
The semantics is defined with the usual rules:
{\small%
\[\dedrule{}{a.\mu \trans[a] \mu}
  \qquad\qquad
  \dedrule{x\trans[a]\mu}{{x+y}\trans[a]\mu}
  \qquad\qquad
  \dedrule{y\trans[a]\mu}{{x+y}\trans[a]\mu}
\]%
}%

\section{Bisimulation relations}\label{sec:semantic_relations}

This work revolves around four different types of bisimulation relations:
\begin{inparaenum}[(i)]
\item%
  the usual \emph{(strong) bisimulation}~\cite{LS91} relation on
  probabilistic system, in which each probabilistic transition should
  be matched with a single probabilistic transition so that the
  distributions of both transitions agree on the probabilities of
  jumping into equivalent states;
\item%
  the \emph{convex bisimulation}~\cite{Seg95a} relation, in which the
  matching is performed instead with a convex combination of
  transition relations;
\item%
  the \emph{probability abstracted bisimulation}, in which the
  matching is performed by a single transition so that the
  distributions of both transitions agree on jumping to the same
  equivalent classes of states but not necessarily with the same
  probability value; and
\item%
  the \emph{probabilistic obliterated bisimulation}, which represents
  the usual bisimulation~\cite{Mil89} once the probabilistic
  transition system is abstracted into a traditional labeled transition
  system in the usual way.
\end{inparaenum}

\remarkMDL{This paragraph is related with the full abstraction section
  that is not present anymore in the paper.}
\remarkPRD{But it was related to the logic characterization part that it is now on this section}
To our knowledge, the probability abstracted bisimulation originates
here.  Its intention is to strictly preserve the probabilistic
structure of a system without caring about the probability values.  Thus,
probability abstracted bisimulation is consistent with any
bisimulation preserving quantitative properties that only tests for
positive quantifications, rather than a particular value.
Instead, this kind of properties are not preserved by the
probabilistic obliterated bisimulation as it is shown below in
this section.

In the following we introduce all these relations and discuss their
relationship as well as alternative definitions.  For the rest of the
section we assume given a PTS $P= (\closedSTerms,\Act,{\trans})$.

Given a relation ${\relR}\subseteq{\closedSTerms\times\closedSTerms}$,
a set $Q \subseteq \closedSTerms$ is \emph{$\closed{\relR}$} if for
all $t \in Q$ and $t' \in \closedSTerms$, $t \relR t'$ implies $t' \in
Q$ (i.e. ${\relR}(Q) \subseteq Q$).
It is easy to verify that if two relations
${\relR}, {\relR'} \subseteq \closedSTerms \times \closedSTerms$
are such that ${\relR'}\subseteq{\relR}$, then
if $Q \subseteq \closedSTerms$ is $\closed{\relR}$, it is also $\closed{\relR'}$.

\begin{definition}\label{def:bisimulation}
  A relation ${\relR} \subseteq {\closedSTerms \times \closedSTerms}$
  is a \emph{bisimulation} if it is symmetric and for all $t, t' \in
  \closedSTerms$, $a\in \Act$, and $\pi \in \Delta(\closedSTerms)$,
    $t \relR t'$ and $t \trans[a] \pi$ imply that there exists $\pi'
    \in \Delta(\closedSTerms)$ s.t.\ $t' \trans[a] \pi'$ and $\pi
    \relR \pi'$,
  where $\pi \relR \pi'$ if and only if for all $\closed{\relR}\ Q
  \subseteq \closedSTerms$, $\pi(Q) = \pi'(Q)$.
  The relation $\bisim$, called \emph{bisimilarity} or
  \emph{bisimulation equivalence}, is defined as the smallest relation
  that includes all bisimulations.
\end{definition}

A \emph{combined transition} $t \trans[a]_c \pi$ is defined whenever
there is a family $\{\pi_i\}_{i\in I} \subseteq \Delta(\closedSTerms)$
and a family $\{p_i\}_{i\in I} \subseteq [0,1]$ such that $t \trans[a]
\pi_i$ for all $i\in I$, $\sum_{i\in I}p_i = 1$ and $\pi=\sum_{i\in I}p_i\pi_i$.

\begin{definition}
  \label{def:convex_bisimulation}%
  A relation ${\relR} \subseteq {\closedSTerms \times \closedSTerms}$
  is a \emph{convex bisimulation} if it is symmetric and for all $t,
  t' \in \closedSTerms$, $a\in \Act$, and $\pi \in
  \Delta(\closedSTerms)$,
    $t \relR t'$ and $t \trans[a] \pi$ imply that there exists $\pi'
    \in \Delta(\closedSTerms)$ s.t.\ $t' \trans[a]_c \pi'$ and $\pi
    \relR \pi'$.
  The relation $\bisim_c$, called \emph{convex bisimilarity} or
  \emph{convex bisimulation equivalence}, is defined as the smallest
  relation that includes all convex bisimulations.
\end{definition}

\begin{definition}
  \label{def:p_abstracted_bisimulation}%
  A relation ${\relR} \subseteq {\closedSTerms \times \closedSTerms}$
  is a \emph{probability abstracted bisimulation} if it is symmetric
  and for all $t, t' \in \closedSTerms$, $a\in \Act$, and $\pi \in
  \Delta(\closedSTerms)$,
    $t \relR t'$ and $t \trans[a] \pi$ imply that there exists $\pi'
    \in \Delta(\closedSTerms)$ s.t.\ $t' \trans[a] \pi'$ and for all
    $\closed{\relR}$ $Q \subseteq \closedSTerms$, $\pi(Q) > 0$ iff
    $\pi'(Q) > 0$.
  The relation $\bisim_a$, called \emph{probability abstracted
    bisimilarity} or \emph{probability abstracted bisimulation
    equivalence}, is defined as the smallest relation that includes
  all probability abstracted bisimulations.
\end{definition}

Notice that the transfer property in this last case follows the same
structure as the bisimulation, only that it only requires that
$\pi(Q)>0$ iff $\pi'(Q)>0$ for all $\closed{R}$, instead of
$\pi(Q)=\pi'(Q)$.

\begin{definition}
  \label{def:p_obliterated_bisimulation}%
  A relation ${\relR} \subseteq {\closedSTerms \times \closedSTerms}$
  is a \emph{probability obliterated bisimulation} if it is symmetric
  and for all $t, t' \in \closedSTerms$, $a\in \Act$, and $\pi \in
  \Delta(\closedSTerms)$,
    $t \relR t'$ and $t \trans[a] \pi$ imply that for all
    $\closed{\relR}$ $Q \subseteq \closedSTerms$ with $\pi(Q) > 0$,
    there exists $\pi' \in \Delta(\closedSTerms)$ s.t.\ $t' \trans[a]
    \pi'$ and $\pi'(Q) > 0$.
  The relation $\bisim_o$, called \emph{probability obliterated
    bisimilarity} or \emph{probability obliterated bisimulation
    equivalence}, is defined as the smallest relation that includes
  all probability obliterated bisimulations.
\end{definition}

Compare this last definition with
Def.~\ref{def:p_abstracted_bisimulation}.  While for probability
abstracted bisimulation we require that there is a single matching
transition $t' \trans[a] \pi'$ so that $\pi'$ gives some positive
probability to all $\closed{R}$ sets exactly whenever $\pi$ does, the
definition of probability obliterated bisimulation permits to choose
different matching transitions for each $\closed{R}$ set that measures
positively on $\pi$.

It is well known that $\bisim$ and $\bisim_c$ are equivalences
relations and that they also are, respectively, a bisimulation
relation and a convex bisimulation relation.  The fact that $\bisim_o$
is also an equivalence relation and itself a probability obliterated
bisimulation follows from Lemma~\ref{lemma:pob_using_otrans} which
state that it agrees with Park \& Milner's bisimulation.
The same properties can be proven for probability abstracted
bisimulation:

\begin{lemma}\label{lemma:p_abst_bisim_equiv}
  $\bisim_a$ is an equivalence relation and is itself a probability
  abstracted bisimulation.
\end{lemma}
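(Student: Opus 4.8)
The plan is to prove the two assertions together, relying on the monotonicity fact stated just before Definition~\ref{def:bisimulation}: whenever $\relR' \subseteq \relR$, every $\closed{\relR}$ set is also $\closed{\relR'}$. This fact is exactly what lets us transfer the positivity condition between a relation and its sub- and super-relations, and it is the workhorse of the whole argument.

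First I would show that the union of an arbitrary family of probability abstracted bisimulations is again one; since $\bisim_a$ is by definition the union of all of them, this immediately yields that $\bisim_a$ is itself a probability abstracted bisimulation and, in particular, symmetric. So let $\bisim_a = \bigcup_i \relR_i$ with each $\relR_i$ a probability abstracted bisimulation. Symmetry is inherited from the $\relR_i$. For the transfer property, suppose $t \bisim_a t'$ and $t \trans[a] \pi$. Then $t \relR_i t'$ for some $i$, so the transfer property of $\relR_i$ provides $\pi'$ with $t' \trans[a] \pi'$ and $\pi(Q) > 0 \Leftrightarrow \pi'(Q) > 0$ for every $\closed{\relR_i}$ set $Q$. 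Since $\relR_i \subseteq \bisim_a$, the monotonicity fact tells us that every $\closed{\bisim_a}$ set is in particular a $\closed{\relR_i}$ set, so the equivalence $\pi(Q) > 0 \Leftrightarrow \pi'(Q) > 0$ holds for all $\closed{\bisim_a}$ sets $Q$, as required.

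For reflexivity I would check that the identity relation $\mathrm{Id} = \{(t,t) \mid t \in \closedSTerms\}$ is a probability abstracted bisimulation: it is symmetric, and given $t \mathrel{\mathrm{Id}} t$ and $t \trans[a] \pi$, the matching transition is $t \trans[a] \pi$ itself, for which $\pi(Q) > 0 \Leftrightarrow \pi(Q) > 0$ holds trivially. Hence $\mathrm{Id} \subseteq \bisim_a$, giving reflexivity; symmetry of $\bisim_a$ has already been observed above.

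The main obstacle is transitivity, which I would obtain by showing that the relational composition $\relR = {\bisim_a} \circ {\bisim_a}$ is a probability abstracted bisimulation, whence $\relR \subseteq \bisim_a$. Symmetry of $\relR$ follows from symmetry of $\bisim_a$. For the transfer property, suppose $t \relR t''$, witnessed by some $t'$ with $t \bisim_a t'$ and $t' \bisim_a t''$, and suppose $t \trans[a] \pi$. Applying the transfer property of $\bisim_a$ (established above) twice yields $\pi'$ with $t' \trans[a] \pi'$ and $\pi''$ with $t'' \trans[a] \pi''$ such that $\pi(Q) > 0 \Leftrightarrow \pi'(Q) > 0$ and $\pi'(Q) > 0 \Leftrightarrow \pi''(Q) > 0$ for every $\closed{\bisim_a}$ set $Q$. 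The delicate point is that the required condition ranges over $\closed{\relR}$ sets rather than $\closed{\bisim_a}$ sets; here reflexivity of $\bisim_a$ gives $\bisim_a \subseteq \relR$, so by the monotonicity fact every $\closed{\relR}$ set is a $\closed{\bisim_a}$ set, and the two chained equivalences combine to give $\pi(Q) > 0 \Leftrightarrow \pi''(Q) > 0$ for all $\closed{\relR}$ sets $Q$. Thus $\relR$ is a probability abstracted bisimulation, $\relR \subseteq \bisim_a$, and $\bisim_a$ is transitive.
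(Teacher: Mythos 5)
Your proposal is correct and follows essentially the same route as the paper's proof: reflexivity via the identity relation, closure of probability abstracted bisimulations under union, and transitivity via a composed relation, all hinging on the monotonicity fact that $\closed{\relR}$ sets remain closed under sub-relations. The only cosmetic difference is that you compose $\bisim_a$ with itself after first establishing it is a probability abstracted bisimulation, whereas the paper works with the symmetrized union $\relR \cup \relR' \cup \relR\circ\relR' \cup \relR'\circ\relR$ of two witnessing bisimulations; both arguments are sound and interchangeable.
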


Similarly to the bisimulation~\cite[Prop 3.4.4]{Bai98}, the
probability abstracted bisimulation has a characterization in terms of
an \emph{abstract} weight function.  This alternative characterization
is the one used in the proof of
Theorem~\ref{th:prob_abstracted_format} and that is why we present it
in this paper.

Given a relation ${\relR} \subseteq {\closedSTerms \times
  \closedSTerms}$, we define $\equiv^w_{\relR} \in
{\Delta(\closedSTerms) \times \Delta(\closedSTerms)}$ as follows. For
all $\pi,\pi'\in\Delta(\closedSTerms)$, $\pi \equiv^w_{\relR} \pi'$ if
there is an \emph{abstract weight function}
$w:(\closedSTerms\times\closedSTerms)\to[0,1]$ s.t.\ for all
$t,t'\in\closedSTerms$,
\begin{inparaenum}[(i)]
\item\label{eq:weigh-left}%
  $w(t,\closedSTerms) > 0$ iff $\pi(t) > 0$,
\item\label{eq:weigh-right}%
  $w(\closedSTerms,t') > 0$ iff $\pi'(t') > 0$, and
\item\label{eq:weigh-rel}
  $w(t,t') > 0$ implies $t \relR t'$.
\end{inparaenum}

\begin{lemma}\label{lemma:altdef_p_abst_bisim}
  For all $t, t' \in \closedSTerms$, $t\bisim_a t'$ if and only if
  there is a symmetric relation ${\relR} \subseteq
  {\closedSTerms\times\closedSTerms}$ with $t \relR t'$ such that for
  all $t_1, t_2 \in \closedSTerms$, $a\in \Act$, and $\pi_1 \in
  \Delta(\closedSTerms)$,
    $t_1 \relR t_2$ and $t_1 \trans[a] \pi_1$ imply that there exists
    $\pi_2 \in \Delta(\closedSTerms)$ s.t.\ $t_2 \trans[a] \pi_2$ and
    $\pi_1 \equiv^w_{\relR} \pi_2$.
\end{lemma}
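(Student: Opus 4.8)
The plan is to reduce the statement to a single auxiliary claim that relates the two formulations of the transfer property, and then instantiate that claim once in each direction of the biconditional. Concretely, for a symmetric relation $\relR$ and distributions $\pi_1,\pi_2 \in \Delta(\closedSTerms)$ I want to establish: (a) if $\pi_1 \equiv^w_{\relR} \pi_2$, then $\pi_1(Q) > 0$ iff $\pi_2(Q) > 0$ for every $\closed{\relR}$ set $Q$; and (b) if moreover $\relR$ is an equivalence and $\pi_1(Q) > 0$ iff $\pi_2(Q) > 0$ for every $\closed{\relR}$ set $Q$, then $\pi_1 \equiv^w_{\relR} \pi_2$. Part (a) is exactly the condition appearing in Def.~\ref{def:p_abstracted_bisimulation}, so once (a) and (b) are available the two implications follow by choosing the right witness relation.

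For (a) I would argue directly from the three defining conditions of the abstract weight function $w$. Assuming $\pi_1(Q) > 0$ for a $\closed{\relR}$ set $Q$, some $t \in Q$ has $\pi_1(t) > 0$, so the first defining condition of $w$ yields a $t'$ with $w(t,t') > 0$, whence $t \relR t'$ by the third condition and $\pi_2(t') > 0$ by the second. Since $Q$ is $\closed{\relR}$ and $t \in Q$, we get $t' \in Q$, so $\pi_2(Q) > 0$. The converse implication is symmetric: because $\relR$ is symmetric, the transposed function $w'(s,s') = w(s',s)$ witnesses $\pi_2 \equiv^w_{\relR} \pi_1$, so $\equiv^w_{\relR}$ is itself symmetric and the same argument applies.

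For (b), the heart of the proof, I would build $w$ from a matching graph. Let $W = {\relR} \cap (\support(\pi_1) \times \support(\pi_2))$. The key observation is that for each $t \in \support(\pi_1)$ its equivalence class $[t]_{\relR}$ is a $\closed{\relR}$ set containing $t$, so $\pi_1([t]_{\relR}) > 0$ and hence, by hypothesis, $\pi_2([t]_{\relR}) > 0$; thus some $t' \in \support(\pi_2)$ satisfies $t \relR t'$, i.e. $t$ lies in the domain of $W$. The symmetric argument shows that every $t' \in \support(\pi_2)$ lies in the range of $W$. Since the supports are countable, $W$ is countable, so I can assign each pair in $W$ a positive weight, chosen small enough that all weights and all marginal sums stay in $[0,1]$ (e.g. summing to at most one), and set $w$ to $0$ off $W$. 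The first and second conditions of the weight function then hold because the domain and range of $W$ are exactly the two supports, and the third holds because $W \subseteq {\relR}$.

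Finally I assemble the biconditional. For the forward implication, assuming $t \bisim_a t'$ I take $\relR := {\bisim_a}$, which by Lemma~\ref{lemma:p_abst_bisim_equiv} is an equivalence and a probability abstracted bisimulation; given $t_1 \bisim_a t_2$ and $t_1 \trans[a] \pi_1$ the bisimulation yields $\pi_2$ with $t_2 \trans[a] \pi_2$ and the support condition, and (b) upgrades this to $\pi_1 \equiv^w_{\bisim_a} \pi_2$. For the backward implication, given a symmetric $\relR$ with $t \relR t'$ satisfying the $\equiv^w_{\relR}$ transfer property, part (a) shows that $\relR$ satisfies the transfer property of Def.~\ref{def:p_abstracted_bisimulation}, hence $\relR$ is a probability abstracted bisimulation and so $\relR \subseteq {\bisim_a}$, giving $t \bisim_a t'$. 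The main obstacle is part (b): proving that the matching graph $W$ covers both supports in full (which is where reflexivity and transitivity of $\relR$ are genuinely used, via closedness of equivalence classes) and choosing the weights so that $w$ and its marginals remain in $[0,1]$ over a possibly infinite support.
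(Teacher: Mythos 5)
Your proof is correct. The paper states this lemma without including a proof, so there is nothing to compare line by line, but your argument is complete and follows the route the paper clearly intends: reduce the claim to the equivalence, for a fixed pair of distributions, between the closed-set condition of Def.~\ref{def:p_abstracted_bisimulation} and the existence of an abstract weight function, using Lemma~\ref{lemma:p_abst_bisim_equiv} to justify taking $\relR = {\bisim_a}$ (an equivalence) in the forward direction so that equivalence classes serve as the $\closed{\relR}$ sets witnessing that the matching graph $W$ covers both supports. The only point worth noting is that your worry about keeping the marginals of $w$ in $[0,1]$ is moot: the definition of the abstract weight function only constrains the individual values $w(t,t')$ and the \emph{positivity} of the marginals, so any assignment of positive weights to the countably many pairs of $W$ works.
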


The next lemma shows that the probability obliterated bisimulation
agrees with Park \& Milner's bisimulation.
Denote $t\otrans[a]t'$ iff there is $\pi$ such that $t\trans[a]\pi$
and $\pi(t')>0$.  Notice that this notation precisely defines the
usual abstraction of probabilistic transition systems into labeled
transition systems in which all information regarding the probability
distribution is lost except from the fact that one state can reach
another state with positive probability after a transition.

\remarkPRD{Since I do not know how will finally the proof of the
  format for probability obliterated bisimulation will end up, I
  decided to keep this only alternative characterization which is
  used in the proof of the logic characterization. other alternative
  characterizations are commented out and we may later include them}

\begin{lemma}\label{lemma:pob_using_otrans}
  For all $t, t' \in \closedSTerms$, $t\bisim_o t'$ iff
  there is a symmetric relation ${\relR} \subseteq
  {\closedSTerms\times\closedSTerms}$ with $t \relR t'$ s.t.\ for
  all $t_1,t_2,t'_1 \in \closedSTerms$ and $a\in \Act$,
    $t_1 \relR t_2$ and $t_1 \otrans[a] t'_1$ imply that there exists
    $t'_2 \in \closedSTerms$ s.t.\ $t_2 \otrans[a] t'_2$ and $t'_1
    \relR t'_2$.
\end{lemma}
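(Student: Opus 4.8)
The plan is to recognise the right-hand side of the statement as precisely ordinary (Park \& Milner) bisimilarity on the \emph{obliterated} labelled transition system $(\closedSTerms,\Act,{\otrans})$: a symmetric $\relR$ satisfying the displayed transfer condition is exactly a strong bisimulation on this LTS, so the right-hand side holds for $t,t'$ iff $t\bisim_{pm}t'$, where $\bisim_{pm}$ denotes bisimilarity on $(\closedSTerms,\Act,{\otrans})$, i.e.\ the union of all such bisimulations. Since $\bisim_{pm}$ is, by the standard theory of LTS bisimilarity, an equivalence relation and itself a bisimulation on the obliterated LTS, the lemma reduces to the set equality $\bisim_o=\bisim_{pm}$, which I would prove by two inclusions. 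The common thread is the observation (already recorded before Def.~\ref{def:bisimulation}) that for an \emph{equivalence} relation the $\closed{\relR}$ sets are exactly the unions of equivalence classes, together with the elementary fact that $\pi(Q)>0$ forces $\pi(s)>0$ for some $s\in Q$.

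For $\bisim_{pm}\subseteq\bisim_o$ I would show that $\bisim_{pm}$ is itself a probability obliterated bisimulation. Assume $t\bisim_{pm}t'$ and $t\trans[a]\pi$, and let $Q$ be $\closed{\bisim_{pm}}$ with $\pi(Q)>0$. Pick $t_1'\in Q$ with $\pi(t_1')>0$, so $t\otrans[a]t_1'$; since $\bisim_{pm}$ is a bisimulation on the obliterated LTS there is $t_2'$ with $t'\otrans[a]t_2'$ and $t_1'\bisim_{pm}t_2'$. As $Q$ is $\closed{\bisim_{pm}}$ and $t_1'\in Q$, we get $t_2'\in Q$, and $t'\otrans[a]t_2'$ unfolds to some $\pi'$ with $t'\trans[a]\pi'$ and $\pi'(t_2')>0$, hence $\pi'(Q)>0$. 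This is exactly the transfer property of Def.~\ref{def:p_obliterated_bisimulation}, so $\bisim_{pm}$ is a probability obliterated bisimulation and therefore contained in $\bisim_o$.

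For $\bisim_o\subseteq\bisim_{pm}$ it suffices to embed every probability obliterated bisimulation $\relR$ into a bisimulation on the obliterated LTS; the natural candidate is the equivalence closure $\hat{\relR}$ of the (already symmetric) $\relR$. The key preliminary step, and the one I expect to be the main obstacle, is that a single $\relR$ need \emph{not} relate matched successors directly but only up to $\hat{\relR}$, so one cannot simply check that $\relR$ is a bisimulation. Concretely, I would first prove a weak transfer property: if $s\relR s'$ and $s\otrans[a]u$, then taking $Q=[u]_{\hat\relR}$ (which is $\closed{\hat\relR}$, hence $\closed{\relR}$ since $\relR\subseteq\hat\relR$) and applying the condition of Def.~\ref{def:p_obliterated_bisimulation} yields $\pi'$ with $s'\trans[a]\pi'$ and $\pi'(Q)>0$, so some $u'\in Q$ has $\pi'(u')>0$, giving $s'\otrans[a]u'$ with $u\hat\relR u'$. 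Because $\hat{\relR}$ is the reflexive-transitive closure of $\relR$, any pair $t_1\hat\relR t_2$ is linked by a finite $\relR$-chain, and iterating the weak transfer property along this chain (composing the resulting $\hat\relR$-relationships by transitivity) produces, from $t_1\otrans[a]t_1'$, a state $t_2'$ with $t_2\otrans[a]t_2'$ and $t_1'\hat\relR t_2'$. Thus $\hat{\relR}$ is a bisimulation on the obliterated LTS containing $\relR$, whence $\relR\subseteq\hat\relR\subseteq\bisim_{pm}$; taking the union over all probability obliterated bisimulations gives $\bisim_o\subseteq\bisim_{pm}$, completing the proof.
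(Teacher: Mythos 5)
Your proof is correct and follows exactly the reading the paper intends: the text surrounding the lemma introduces it as showing that $\bisim_o$ agrees with Park \& Milner's bisimulation on the obliterated LTS $(\closedSTerms,\Act,{\otrans})$, and the paper supplies no further proof to compare against, so your two-inclusion argument ($\bisim_{pm}$ is a probability obliterated bisimulation; the equivalence closure of any probability obliterated bisimulation is an LTS bisimulation) is the natural completion. You also correctly identify and resolve the one genuine subtlety, namely that a single probability obliterated bisimulation $\relR$ only transfers $\otrans$-successors up to its equivalence closure (since the smallest $\closed{\relR}$ set containing a successor is its class under that closure), which is why the finite-chain iteration of the weak transfer property is needed.
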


Finally we state the relation among the different bisimulations

\begin{lemma}\label{lemma:relations_among_sem}
  The following inclusions hold and are proper:
  ${\bisim} \subsetneq {\bisim_c} \subsetneq {\bisim_o}$ and
  ${\bisim} \subsetneq {\bisim_a} \subsetneq {\bisim_o}$.
  Besides ${\bisim_c}$ and ${\bisim_a}$ are incomparable.
\end{lemma}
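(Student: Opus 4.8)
The plan is to handle the four inclusions and the strictness/incomparability claims by entirely different means. For the inclusions I would exploit that, by the facts recalled before Definition~\ref{def:bisimulation} together with Lemmas~\ref{lemma:p_abst_bisim_equiv} and~\ref{lemma:pob_using_otrans}, each of $\bisim,\bisim_c,\bisim_a,\bisim_o$ is itself a bisimulation of its own kind (and the largest one). Hence it suffices to prove the four pointwise implications ``every bisimulation is a convex bisimulation'', ``every bisimulation is a probability abstracted bisimulation'', ``every probability abstracted bisimulation is a probability obliterated bisimulation'', and ``every convex bisimulation is a probability obliterated bisimulation'', and then apply each to the relevant maximal relation. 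The first three are immediate transfer-condition checks: a plain transition $t'\trans[a]\pi'$ is the degenerate combined transition $t'\trans[a]_c\pi'$ (taking a one-element index set with weight $1$), giving $\bisim\subseteq\bisim_c$; the equality $\pi(Q)=\pi'(Q)$ on every $\closed{\relR}$ set $Q$ implies the equivalence $\pi(Q)>0\Leftrightarrow\pi'(Q)>0$, giving $\bisim\subseteq\bisim_a$; and the single witness $\pi'$ produced by probability abstraction serves, unchanged, each $\closed{\relR}$ set $Q$ with $\pi(Q)>0$, giving $\bisim_a\subseteq\bisim_o$. The only step requiring a computation is $\bisim_c\subseteq\bisim_o$: given a convex match $t'\trans[a]_c\pi'$ with $\pi'=\sum_{i\in I}p_i\pi'_i$ and $\pi\mathrel{\relR}\pi'$, for a $\closed{\relR}$ set $Q$ with $\pi(Q)>0$ I would use $\pi'(Q)=\pi(Q)>0$ and $\pi'(Q)=\sum_{i}p_i\pi'_i(Q)$ to extract an index $i$ with $\pi'_i(Q)>0$; then $t'\trans[a]\pi'_i$ is the plain transition required by Definition~\ref{def:p_obliterated_bisimulation}.

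For strictness and incomparability I would exhibit concrete witnesses in the basic probabilistic process algebra of Sec.~\ref{sec:ptss}. Fix distinct actions $b\neq c$ and put $s_1=b.\delta(\nullproc)$ and $s_2=c.\delta(\nullproc)$; since the label of a first transition must always be matched, $s_1$ and $s_2$ fall in different classes of all four equivalences. I then use three pairs. First, $u_1=a.\delta(s_1)+a.\delta(s_2)$ against $u_1'=u_1+a.(\delta(s_1)\oplus_{1/2}\delta(s_2))$: the extra branch of $u_1'$ is recovered by $u_1$ as a combined transition, so $u_1\bisim_c u_1'$, while the distribution $\tfrac12\delta_{s_1}+\tfrac12\delta_{s_2}$ of that branch admits no single matching transition from $u_1$ (each transition of $u_1$ charges only one of $[s_1]_{\bisim},[s_2]_{\bisim}$), whence $u_1\not\bisim u_1'$ and $u_1\not\bisim_a u_1'$; this settles $\bisim\subsetneq\bisim_c$ and $\bisim_c\not\subseteq\bisim_a$. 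Second, $u_2=a.(\delta(s_1)\oplus_{1/3}\delta(s_2))$ against $u_2'=a.(\delta(s_1)\oplus_{2/3}\delta(s_2))$: both transitions charge $[s_1]$ and $[s_2]$ positively, so $u_2\bisim_a u_2'$ and $u_2\bisim_o u_2'$, yet the measures on $[s_1]$ differ ($\tfrac13$ versus $\tfrac23$) and neither process has any other transition, so $u_2\not\bisim u_2'$ and $u_2\not\bisim_c u_2'$; this settles $\bisim\subsetneq\bisim_a$, $\bisim_a\not\subseteq\bisim_c$ and $\bisim_c\subsetneq\bisim_o$. Third, $u_3=a.(\delta(s_1)\oplus_{1/2}\delta(s_2))$ against $u_3'=a.\delta(s_1)+a.\delta(s_2)$: both reach $[s_1]$ and $[s_2]$ after $a$, so $u_3\bisim_o u_3'$, but the single transition of $u_3$ charges both classes at once while each transition of $u_3'$ charges only one, so $u_3'$ has no single transition reproducing the positivity pattern of $u_3$, giving $u_3\not\bisim_a u_3'$ and hence $\bisim_a\subsetneq\bisim_o$. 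Together with the inclusions, these three pairs yield all the strict inclusions and the incomparability of $\bisim_c$ and $\bisim_a$.

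The positive memberships above are verified by displaying the obvious small symmetric relations (the identity augmented with the stated pair) and checking the transfer conditions, which is routine. The negative memberships are where care is needed, and here the key device is that each of $\bisim$, $\bisim_c$, $\bisim_a$ is itself a bisimulation of its kind and an equivalence relation. Concretely, to prove a claim such as $u_1\not\bisim_a u_1'$ I assume the contrary, instantiate the transfer property of Definition~\ref{def:p_abstracted_bisimulation} with $\relR=\bisim_a$ (legitimate precisely because $\bisim_a$ is a probability abstracted bisimulation), and take the separating sets to be the equivalence classes $[s_1]_{\bisim_a}$ and $[s_2]_{\bisim_a}$, which are $\closed{\relR}$ since $\bisim_a$ is transitive and symmetric; ruling out a match then reduces to comparing the finitely many candidate target distributions against those two classes, where the measure or positivity mismatch is decisive. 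I expect this bookkeeping, rather than any deep argument, to be the main obstacle: one must consistently use maximality (so that no witnessing relation can identify $s_1$ with $s_2$) and the equivalence-class/closed-set correspondence to make each non-membership airtight, and one must double-check that the single pair $(u_2,u_2')$ indeed discharges three separate obligations simultaneously.
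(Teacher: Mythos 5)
Your proposal is correct and follows essentially the same route as the paper: the same four pointwise implications (every bisimulation is a convex and a probability abstracted bisimulation, and each of the latter two is a probability obliterated bisimulation, with the same index-extraction argument for the convex case), combined with the same witness processes ($\te_1,\te_2$ and $\te_3,\te_4$ in the paper's notation) to get properness and incomparability. Your third pair is the paper's $\te_5,\te_6$ and is redundant here, since $\te_1\bisim_c\te_2$ already yields $\te_1\bisim_o\te_2$ while $\te_1\not\bisim_a\te_2$, but it does no harm.
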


In fact the results can be proved stronger as we explain in the following.
Any bisimulation relation is also a convex bisimulation, which follow
from the fact that $t\trans[a]\pi$ implies $t\trans[a]_c\pi$.
Any convex bisimulation is also a probability obliterated bisimulation
since $t\trans[a]_c\pi$ with $\pi(Q)>0$ implies that there is a $\pi'$
such that $t\trans[a]\pi'$ and $\pi'(Q)>0$.
Any bisimulation is also a probability abstracted bisimulation since
$\pi \relR \pi'$ implies $\pi(Q) > 0$ iff $\pi'(Q) > 0$ for all
$\closed{\relR}$ $Q$.
Finally, any probability abstracted bisimulation is also a probability
obliterated bisimulation since, for a given $\pi$ and $\relR$, the
existence of a $\pi'$ s.t.\ $t'\trans[a]\pi'$ and $\pi(Q)>0$ iff
$\pi'(Q)>0$ for all $\closed{\relR}$ $Q$, guarantees that, for all
$\closed{\relR}$ $Q$ with $\pi(Q)>0$ there is a $\pi'$
s.t.\ $t'\trans[a]\pi'$ and $\pi'(Q)>0$.

Notice that $\te_1=a.(\dist{b.\nullproc})+a.(\dist{c.\nullproc})$ and
$\te_2 = \te_1+a.(\dist{b.\nullproc}\oplus_{0.5}\dist{c.\nullproc})$
are convex bisimilar but not probability abstracted bisimilar.
Besides, notice that
$\te_3=a.(\dist{b.\nullproc}\oplus_{0.5}\dist{c.\nullproc})$ and
$\te_4=a.(\dist{b.\nullproc}\oplus_{0.1}\dist{c.\nullproc})$ are probability
abstracted bisimilar but not convex bisimilar.  These examples not
only show that ${\bisim_c}$ and ${\bisim_a}$ are incomparable, but
also that all stated inclusions are proper.

In the rest of the section we present logical characterizations for
the different bisimulation equivalences.  This work has already been
done for bisimulation~\cite{HPS11,DTW12} and convex
bisimulation~\cite{HPS11}.  We adopt here the two-level logic style
of~\cite{DTW12}.
  
We define the logic $\blogic$ as the set of all formulas with the
following syntax:
\[\textstyle
  \phi := \ \top \ \mid \ \dia{a}{\psi} \ \mid \ \diac{a}{\psi} \ \mid \ \bigwedge_{i \in I} \phi_i \ \mid \ \neg \phi
  \qquad\qquad
  \psi :=  \ \textstyle [\phi]_{p} \ \mid \ \pand_{i \in I} \psi_i
\]
where $a \in A$, $p \in [0,1]\cap\Q$, and $I$ is any index set.
The logic $\clogic$ contains all formulas of $\blogic$ without the
modality $\dia{a}{\_}$ .
The logic $\alogic$ contains all formulas of $\blogic$ without
modalities $\diac{a}{\_}$ and $[\_]_{p}$ for all $p>0$ (i.e.\ it only
accepts $[\_]_0$ among this type of modalities.)
Finally, the logic $\ologic$ contains all formulas of $\alogic$ without
$\pand_{i \in I}\_$ .

The semantics of $\blogic$ is defined with the satisfaction relation
$\models$ on a PTS $P = (\closedSTerms, A, \trans)$ as follows.
\begin{align*}
  \ctop && t & \models \top &&\text{ for all $t \in \closedSTerms$ } &
  \cneg && t & \models \neg \phi &&\text{ if $t \not \models \phi$}\\
  \ca && t & \models \dia{a}{\psi} &&\text{ if there is $t \trans[a] \pi$ s.t.\ $\pi \models \psi$} &
  \cmeasure && \pi & \models [\phi]_{p} &&\text{ if $\pi(\{t \in \closedSTerms \mid t \models \phi\}) > p$}\\
  \cac && t & \models \diac{a}{\psi} &&\text{ if there is $t \trans[a]_c \pi$ s.t.\ $\pi \models \psi$} & \qquad
  \textstyle\cpand && \pi & \models \textstyle \pand_{i \in I} \psi_i &&\text{ if $\pi \models \psi_i$ for all $i \in I$}\\
  \textstyle\cland && t & \models \textstyle \bigwedge_{i \in I} \phi_i &&\text{ if $t \models \phi_i$ for all $i \in I$ }
\end{align*}
The semantics of the other logics is defined in the same way but
restricted to the respective operators.

For $\chi\in\{b,c,a,o\}$, let
$\xlogic(t) = \{{\phi\in\xlogic} \mid t \models \phi\}$, for all
$t\in\closedSTerms$, and
$\xlogic(\pi) = \{{\psi\in\xlogic} \mid \pi \models \psi\}$, for all
$\pi\in\Delta(\closedSTerms)$.
We write $t_1 \bisim_{\xlogic} t_2$ iff $\xlogic(t_1) = \xlogic(t_2)$
and $\pi_1 \bisim_{\xlogic} \pi_2$ iff $\xlogic(\pi_1) = \xlogic(\pi_2)$.
Then, we have the following characterization theorem.

\begin{theorem}\label{th:logic_characterization}
  For all $\chi\in\{b,c,a,o\}$ and for all $t_1, t_2 \in
  \closedSTerms$, $t_1 \bisim_\chi t_2$ iff $t_1 \bisim_{\xlogic} t_2$
  (where ${\bbisim}={\bisim}$).
\end{theorem}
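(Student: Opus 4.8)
The plan is to prove both implications, for each $\chi\in\{b,c,a,o\}$, by the standard Hennessy--Milner strategy adapted to the two-level syntax. I would establish soundness ($t_1\bisim_\chi t_2 \Rightarrow t_1\bisim_{\xlogic}t_2$) by a simultaneous induction on the structure of state formulas $\phi$ and distribution formulas $\psi$, showing that $t_1\bisim_\chi t_2$ implies $t_1\models\phi \Leftrightarrow t_2\models\phi$ and that the distribution-level relation induced by the matching of Def.~\ref{def:bisimulation}--\ref{def:p_obliterated_bisimulation} implies $\pi_1\models\psi \Leftrightarrow \pi_2\models\psi$. The observation driving the modal cases is that, by the induction hypothesis, every set $\{t\mid t\models\phi\}$ is $\bisim_\chi$-closed, hence is measured identically (for $\chi=b$), or positively-or-not identically (for $\chi=a,o$), by related distributions; the modal clause then closes under the transfer property of the relevant bisimulation, using $\trans[a]$ for $\chi\in\{b,a\}$, combined transitions $\trans[a]_c$ for $\chi=c$, and $\otrans[a]$ for $\chi=o$.

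For $\chi=o$ this direction, and indeed the whole equivalence, collapses onto the classical (infinitary) Hennessy--Milner theorem: by Lemma~\ref{lemma:pob_using_otrans}, $\bisim_o$ is exactly bisimilarity on the labelled transition system $\otrans[a]$, while $\ologic$ is, modulo the encoding $\dia{a}{[\phi]_0}\equiv\langle a\rangle\phi$, precisely Hennessy--Milner logic over that system. I would therefore handle $o$ separately by invoking the standard result, observing that the arbitrary index sets permitted in $\bigwedge_{i\in I}$ remove the usual need for an image-finiteness hypothesis.

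The completeness direction ($t_1\bisim_{\xlogic}t_2\Rightarrow t_1\bisim_\chi t_2$) is where the real work lies, and I would prove it by showing that logical equivalence $\bisim_{\xlogic}$ is itself a $\chi$-bisimulation, so that maximality of $\bisim_\chi$ yields the inclusion. The engine is a pair of characteristic formulas. For each $\bisim_{\xlogic}$-class I build a state formula $\phi_{[u]}=\bigwedge_{[v]\neq[u]}\phi_{[u],[v]}$, where $\phi_{[u],[v]}$ separates the classes (possibly after a state-level negation, which all four logics provide), so that $s\models\phi_{[u]}\Leftrightarrow s\bisim_{\xlogic}u$; the arbitrary conjunctions make this an honest formula of $\xlogic$. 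From these I assemble, for a transition $t_1\trans[a]\pi_1$, a distribution formula $\psi_{\pi_1}$ pinning $\pi_1$ down to the precision the relation requires, and use $\dia{a}{\psi_{\pi_1}}$ (resp.\ $\diac{a}{\psi_{\pi_1}}$ for $\chi=c$) to force a matching transition out of $t_2$. For $\chi=b$ the decisive trick is that, although $[\phi]_p$ expresses only the \emph{lower} bound $\pi(\{s\models\phi\})>p$, taking $\psi_{\pi_1}=\pand_{[u]}\pand_{p\in\Q,\,p<\pi_1([u])}[\phi_{[u]}]_p$ forces any matching $\pi_2$ to satisfy $\pi_2([u])\geq\pi_1([u])$ for every class; since both distributions have total mass $1$ over the countably many classes they meet, all these inequalities are in fact equalities, whence $\pi_1\bisim\pi_2$. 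For $\chi=c$ the same $\psi_{\pi_1}$ works once one proves the auxiliary fact that $\bisim_c$ is preserved by combined transitions (needed for soundness) and notes that $t_1\trans[a]\pi_1$ is a trivial combined transition, so $\diac{a}{\psi_{\pi_1}}$ holds at $t_1$ and produces a combined $t_2\trans[a]_c\pi_2$ agreeing with $\pi_1$ on every class.

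The step I expect to be the main obstacle is the probability abstracted case $\chi=a$. Its transfer property demands an \emph{exact} match of the positivity pattern: a single $t_2\trans[a]\pi_2$ meeting precisely the same $\bisim_a$-classes as $\pi_1$. The distribution fragment of $\alogic$ offers only $[\phi]_0$, i.e.\ ``$\pi$ is positive on $\{s\models\phi\}$'', so $\pand_{[u]:\,\pi_1([u])>0}[\phi_{[u]}]_0$ forces the support of $\pi_2$ to \emph{contain} every required class but cannot by itself forbid \emph{extra} classes, because positivity assertions are upward closed in the support. Capturing the reverse inclusion is delicate: I would combine the positive requirement inside the diamond with a state-level negation of a diamond, along the lines of $\neg\dia{a}{\big(\pand_{[u]:\,\pi_1([u])>0}[\phi_{[u]}]_0 \pand [\neg\phi_{S_1}]_0\big)}$, where $\phi_{S_1}$ is the (disjunctive) state formula selecting exactly the classes in the support of $\pi_1$, so as to exclude matching transitions that cover the support yet spill beyond it. Checking that this combination genuinely forces an exact-support witness, rather than merely constraining the maximal supports available at $t_2$, is the technical heart of the argument and the point I expect to demand the most care.
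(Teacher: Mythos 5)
Your overall strategy for $\chi\in\{b,c,o\}$ is sound and is the standard one for this kind of statement: soundness by induction on formulas, using that every formula's extension is a union of equivalence classes; completeness by showing that logical equivalence is itself a $\chi$-bisimulation, built on characteristic formulas $\phi_{[u]}$ obtained from arbitrary conjunctions. The rational-approximation trick $\pand_{p<\pi_1([u])}[\phi_{[u]}]_p$ combined with the total-mass argument does correctly upgrade the one-sided bounds $\pi_2([u])\geq\pi_1([u])$ to equalities, for both $\chi=b$ and $\chi=c$, and the reduction of the $\chi=o$ case to infinitary Hennessy--Milner logic over the abstracted transitions $\otrans[a]$ via Lemma~\ref{lemma:pob_using_otrans} is clean and correct.

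The step you flag as the technical heart of the $\chi=a$ case is, however, a genuine gap, and your proposed repair does not close it. The formula $\neg\dia{a}{\big(\pand_{[u]\in S_1}[\phi_{[u]}]_0\pand[\neg\phi_{S_1}]_0\big)}$ is only of use if it actually holds at $t_1$, i.e.\ if $t_1$ has \emph{no} $a$-transition whose support-class-set strictly contains $S_1$; nothing guarantees this, so in general you cannot transport it to $t_2$. Worse, no choice of $\alogic$-formula can complete this step. A distribution formula of $\alogic$ is a conjunction of positivity assertions $[\phi]_0$, so the set of distribution formulas satisfied by $\pi$ is monotone in the set of $\bisim_{\alogic}$-classes met by $\support(\pi)$; consequently the modality $\dia{a}{\psi}$ can only detect the \emph{downward closure} (under inclusion of support-class-sets) of the family of $a$-successors of a state, whereas the transfer property of Def.~\ref{def:p_abstracted_bisimulation} demands an exact member of that family.

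Concretely, let $t_1=a.(\dist{b.\nullproc}\oplus_{0.5}\dist{c.\nullproc})$ and $t_2=t_1+a.(\dist{b.\nullproc})$. Every distribution formula satisfied by $\sem{\dist{b.\nullproc}}{}$ is also satisfied by $\sem{\dist{b.\nullproc}\oplus_{0.5}\dist{c.\nullproc}}{}$, whose support meets strictly more classes; since the $a$-successors of $t_1$ are a subset of those of $t_2$, it follows that $t_1$ and $t_2$ satisfy exactly the same formulas $\dia{a}{\psi}$, and an easy induction extends this to all of $\alogic$, so $t_1\bisim_{\alogic}t_2$. Yet $t_1\not\bisim_a t_2$: the transition $t_2\trans[a]\sem{\dist{b.\nullproc}}{}$ can only be matched by $t_1\trans[a]\sem{\dist{b.\nullproc}\oplus_{0.5}\dist{c.\nullproc}}{}$, and taking $Q$ to be the smallest $\closed{\relR}$ set containing the $c$-capable state, the latter distribution gives $Q$ positive mass while the Dirac gives it mass $0$ (no probability abstracted bisimulation can relate a state that can only do $b$ with one that can only do $c$). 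So the implication your argument is trying to establish for $\chi=a$ fails on this pair: either $\alogic$ must be strengthened (e.g.\ with a modality expressing $\pi(\{s\mid s\models\phi\})=0$, or negation at the distribution level) or the claim must be weakened; as it stands the $\chi=a$ case of your proof cannot be completed.
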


Let $\te_1$, $\te_2$, $\te_3$, and $\te_4$ be as before.  Recall
$\te_1\cbisim\te_2$ and $\te_3\abisim\te_4$.
Notice that
$\dia{a}{([\dia{b}{\top}]_{0.5}\sqcap[\dia{c}{\top}]_{0.5})}$
distinguish $\te_1$ from $\te_2$, while
$\diac{a}{([\dia{b}{\top}]_{0.5}\sqcap[\dia{c}{\top}]_{0.5})}$ is
satisfied by both $\te_1$ and $\te_2$.  That is why $\dia{a}{\_}$ is not
an operator of $\clogic$.
Notice $[\dia{b}{\top}]_{0.5}$ distinguishes the distribution
$\sem{b.\dist{\nullproc}\oplus_{0.5}c.\dist{\nullproc}}{}$ from
$\sem{b.\dist{\nullproc}\oplus_{0.1}c.\dist{\nullproc}}{}$, while
$[\dia{b}{\top}]_0$ does not
\remarkMDL{I do not understand the sentence: \emph{but it does distinguish them from}}
\remarkPRD{Check the whole sentencenow thatI changed it}
(but it does distinguish them from e.g.\ $\sem{c.\dist{\nullproc}}{}$).
Thus $\dia{a}{[\dia{b}{\top}]_{0.5}}$ distinguishes $\te_3$ from
$\te_4$.  That is why $[\_]_p$ is not an operator of $\alogic$ if
$p>0$.
Finally, notice that
$\dia{a}{([\dia{b}{\top}]_0\sqcap[\dia{c}{\top}]_0)}$ distinguishes
$\te_5 = a.(\dist{b.\nullproc} \oplus_{0.5} \dist{c.\nullproc})$ from
$\te_6 = a.\dist{b.\nullproc} + a.\dist{c.\nullproc}$, and observe
that $\te_5 \bisim_o \te_6$.
However, neither $\dia{a}{[\dia{b}{\top}]_0}$ nor
$\dia{a}{[\dia{c}{\top}]_0}$ can distinguish them.  That is why
$\pand_{i\in I}\_$ is not an operator of $\ologic$.

\section{Formats}\label{sec:formats}

In this section we introduce rule and specification formats that
guarantee that each bisimulation equivalences discussed in the
previous section is a congruence for every operator whose semantics is
defined within the respective rule of the specification format.
In particular, the format \ntmufxt, which ensures that bisimulation
equivalence is a congruence for all operator in such format, has been
already introduced in~\cite{DL12} and finally revised
in~\cite{DGL15a}.  We present here its more general form.

The following definition is important to ensure a symmetric treatment
of variables and terms within the format.
\label{page:diag}
Let $\{Y_l\}_{l\in L}$ be a family of sets of state term variables
with the same cardinality.  The $l$-th element of a tuple $\vec{y}$ is
denoted by $\vec{y}(l)$.  For a set of tuples $T=\{\vec{y_i} \mid i\in
I\}$ we denote the $l$-th projection by $\proj_l(T)=\{\vec{y_i}(l)
\mid i\in I\}$.
Fix a set $\diag{Y_l}_{l\in L} \subseteq \prod_{l\in L} Y_l $ such that:
\begin{inparaenum}[(i)]
\item%
  for all $l \in L$, $\proj_l(\diag{Y_l}_{l\in L}) = Y_l$; and
\item%
  for all $\vec{y}, \vec{y'} \in \diag{Y_l}_{l\in L}$, $(\exists l \in L :
  \vec{y}(l) = \vec{y'}(l)) \limp \vec{y} = \vec{y'}$.
\end{inparaenum}
Property (ii) ensures that different tuples
$\vec{y},\vec{y'}\in\diag{Y_l}_{l\in L}$ differ in all positions, and
by property (i) every variable of every $Y_l$ is used in (exactly) one
$\vec{y} \in \diag{Y_l}_{l\in L}$.
$\Diag$ stands for ``diagonal'', following the intuition that each
$\vec{y}$ represents a coordinate in the space $\prod_{l\in L} Y_l$,
so that $\diag{Y_l}_{l\in L}$ can be seen as the line that traverses
the main diagonal of the space.
Therefore, notice that, for $Y_l=\{y_l^0,y_l^1,y_l^2,\ldots\}$, a
possible definition for the set $\diag{Y_l}_{l\in L}$ is
$  \{(y_1^0,y_2^0,\ldots,y_L^0),(y_1^1,y_2^1,\ldots,y_L^1),(y_1^2,y_2^2,\ldots,y_L^2),\ldots\}
$.
In addition, we use the following notation:
$t(\zeta_1,\ldots,\zeta_n)$ denotes a term that only has variables in
the set $\{\zeta_1,\ldots,\zeta_n\}$, that is
$\Var(t(\zeta_1,\ldots,\zeta_n))\subseteq\{\zeta_1,\ldots,\zeta_n\}$,
and moreover, $t(\zeta'_1,\ldots,\zeta'_n)$ denotes the same term as
$t(\zeta_1,\ldots,\zeta_n)$ in which each variable $\zeta_i$ has been
replaced by $\zeta'_i$.

\begin{definition}\label{def:ntmufxtheta}
  Let $P = (\Sigma, \Act, R)$ be a PTSS.
  A rule $r\in R$ is in \emph{\ntmuft\ format} 
  if it has the following form %
  \[
  \ddedrule%
      {
	{\textstyle
         \bigcup_{m\in M}
           \{ t_m(\vec{z})\trans[a_m]\mu_m^{\vec{z}} \mid \vec{z}\in \mathcal{Z} \} \ \cup \
         \bigcup_{n\in N}
           \{ t_n(\vec{z})\ntrans[b_n] \mid \vec{z}\in \mathcal{Z} \} \  \cup \
	  }
         {\textstyle
		\{ \theta_l(Y_l)\gtgeq_{l,k} p_{l,k} \mid l\in L, k\in K_l \}
	  }
      }
      {f(\zeta_1,\ldots,\zeta_{\rank(f)}) \trans[a] \theta }
  \]
  with ${\gtgeq_{l,k}} \in \{>, \geq\}$ for all $l\in L$ and $k\in K_l$, and
  $\mathcal{Z} = \diag{Y_l}_{l\in L} \times\prod_{\zeta \in W}\{\zeta\}$, with 
  $W\subseteq \TVar \cup \DVar \backslash \bigcup_{l\in L} Y_l$, 
 In addition, it has to satisfy the following conditions:
  \begin{compactenum}
  \item\label{item:conditions_on_cardinality}%
    Each set $Y_l$ should be at least countably infinite, for all
    $l\in L$, and the cardinality of $L$ should be strictly smaller
    than that of the $Y_l$'s.
  \item\label{item:conditions_on_z}%
    All variables $\zeta_1,\ldots,\zeta_{\rank(f)}$ are different.
  \item\label{item:condition_mus_are_different}%
    All variables $\mu_m^{\vec{z}}$, with $m\in M$ and
    $\vec{z}\in \mathcal{Z}$, are different and
    $\{\zeta_1,\ldots,\zeta_{\rank(f)}\} \cap
     \{\mu^{\vec{z}}_m \mid \vec{z} \in \mathcal{Z}, m\in M \}
     = \emptyset$.
  \item\label{item:conditions_on_Y}%
    For all $l\in L$,
    $Y_l \cap \{\zeta_1,\ldots,\zeta_{\rank(f)}\} = \emptyset$, and
    $Y_l \cap Y_{l'} = \emptyset$ for all $l'\in L$, $l\neq l'$.
  \item\label{item:conditions_on_nonrepeating_z}%
    For all $m\in M$, the set
    $\{{\mu^{\vec{z}}_m} \mid {\vec{z}\in \mathcal{Z}}\} \cap
     \left(\Var(\theta) \cup (\bigcup_{l \in L} \Var(\theta_l))\cup W\right)$
    is finite.
  \item\label{item:conditions_on_variables_of_theta}%
    For all $l\in L$, the set
    $Y_l \cap
         \left( \Var(\theta) \cup 
                \bigcup_{l'\in L}\Var(\theta_{l'})\right)$
    is finite.
  \end{compactenum}
  A rule $r\in R$ is in \emph{\ntmuxt{} format} if its form is like
  above but has a conclusion of the form $x \trans[a] \theta$ and, in
  addition, it satisfies the same conditions as above, except that
  whenever we write $\{\zeta_1,\ldots,\zeta_{\rank(f)}\}$, we should
  write $\{x\}$.
  $P$ is in \emph{\ntmuft{} format} if all its rules are in \ntmuft{}
  format.
  $P$ is in \emph{\ntmufxt{} format} if all its rules are in either 
  \ntmuft{} format or \ntmuxt{} format.
\end{definition}

The rationale behind each of the restrictions are discussed
in~\cite{DGL15a} in depth (see also~\cite{DL12}).
In the following we briefly summarize it.
Variables $\zeta_1,\ldots,\zeta_{\rank(f)}$ in the source of the
conclusion, all variables $\mu_m^{\vec{z}}$ in the target of the
positive premises, and all variables in the sets $Y_l$, $l\in L$, as
part of the measurable sets in the quantitative premises, are binding.
That is why all of them are requested to be different, which is stated
in conditions~\ref{item:conditions_on_z},
\ref{item:condition_mus_are_different},
and~\ref{item:conditions_on_Y}.
If $Y_l$ is finite, quantitative premises will allow to count the
minimum number of terms that gather certain probabilities.  This goes
against the spirit of bisimulation that measures equivalence classes of
terms regardless of the size of them.  Therefore $Y_l$ needs to be
infinite (condition~\ref{item:conditions_on_cardinality}).
Condition~\ref{item:conditions_on_nonrepeating_z} ensures that, for
each $m\in M$ there are sufficiently many distribution variables in the set
$\{\mu_m^{\vec{z}}\mid {\vec{z}\in\mathcal{Z}}\}$
to be freely instantiated.  The use of a distribution variable in a
quantitative premise may disclose part of the structural nature of the
distribution term that substitutes such variable.  Thus, for instance,
if all variables $\mu_m^{\vec{z}}$ are used in different quantitative
premises together with some lookahead, we may restrict the syntactic
form of the eventually substituted distribution terms, hence revealing
unwanted differences.  A similar situation arises with the use of
variables in $Y_l$ for all $l\in L$, hence
condition~\ref{item:conditions_on_variables_of_theta}.
The precise understanding of
conditions~\ref{item:conditions_on_nonrepeating_z}
and~\ref{item:conditions_on_variables_of_theta} requires a rather
lengthy explanation that is beyond the scope of this paper.  The
reader is referred to~\cite{DL12,DGL15a} for details.

All congruence theorems in this article apply only to PTSSs whose
rules are \emph{well-founded}.
A rule $r$ is \emph{well-founded} if there is no infinite backward
chain in the \emph{dependency graph} $G_r = (V, E)$ of $r$ defined by
$V = \TVar \cup \DVar$ and
$E =   \{\tuple{\zeta,\mu} \mid (t\trans[a]\mu)\in\pprem{r}, \zeta\in\Var(t)\}
  \cup \{\tuple{\zeta,y} \mid (\theta(Y)\gtgeq p)\in\qprem{r},
                              \zeta\in\Var(\theta), y\in Y\}$.
A PTSS is called \emph{well-founded} if all its rules are
well-founded.

The full proof of the following theorem can be found in~\cite{DGL15a}.

\begin{theorem}\label{th:bisimulation_congruence}
  Let $P = (\Sigma, A, R)$ be a complete well-founded PTSS in \ntmufxt\ format.
  Then, the bisimulation equivalence is a congruence for all operators
  defined in $P$.
\end{theorem}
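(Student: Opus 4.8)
The plan is to show that the smallest congruence $\relR$ (with respect to every operator of $\Sigma$) that contains $\bisim$ is itself a bisimulation. Since $\bisim$ is the largest bisimulation (Def.~\ref{def:bisimulation}), this yields $\relR\subseteq{\bisim}$; as $\bisim\subseteq\relR$ by construction, we conclude $\relR={\bisim}$, so $\bisim$ is a congruence. The relation $\relR$ is an equivalence because $\bisim$ is and congruence closure preserves reflexivity, symmetry and transitivity. I lift $\relR$ to $\closedDTerms$ exactly as $\bisim$ is lifted, i.e.\ $\pi\relR\pi'$ iff $\pi(Q)=\pi'(Q)$ for every $\closed{\relR}$ set $Q$. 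A preliminary step is to verify that this lifting is compatible with the distribution-building operators $\delta$, $\oplus_p$ and the probabilistic liftings $\dist{f}$, which follows by direct computation from the interpretation $\sem{\cdot}{}$ fixed in Sec.~\ref{sec:preliminaries}; this not only discharges the congruence claim for the distribution-sorted operators but is also what lets us turn an $\relR$-matching of premises into an $\relR$-matching of the target distributions.

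The heart is the transfer property of $\relR$, proved by induction on the generation of $\relR$ from $\bisim$. The base case $p\bisim q$ is immediate. In the inductive case $p=f(p_1,\dots,p_n)$, $q=f(q_1,\dots,q_n)$ with $p_i\relR q_i$, a transition $p\trans[a]\pi$ in the PTS associated to $P$ (Def.~\ref{def:associated-model}) is witnessed by a proof whose last step instantiates a rule $r$ in \ntmuft{} (or \ntmuxt{}) format by a proper closed substitution $\rho$ with $\rho(\zeta_i)=p_i$ and $\sem{\rho(\theta)}{}=\pi$. The goal is to build a second proper substitution $\rho'$ with $\rho'(\zeta_i)=q_i$ under which all premises of $r$ still hold, so that $r$ derives $q\trans[a]\sem{\rho'(\theta)}{}$, and with $\sem{\rho(\theta)}{}\relR\sem{\rho'(\theta)}{}$. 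I would define $\rho'$ by well-founded recursion along the dependency graph $G_r$ (this is exactly where well-foundedness of $P$ is needed): set $\rho'(\zeta_i)=q_i$; for each positive premise $t_m(\vec{z})\trans[a_m]\mu_m^{\vec{z}}$ the variables of $t_m(\vec{z})$ are already assigned, $\rho(t_m(\vec{z}))\relR\rho'(t_m(\vec{z}))$ holds by the inductive hypothesis, so matching the transition $\rho(t_m(\vec{z}))\trans[a_m]\sem{\rho(\mu_m^{\vec{z}})}{}$ gives some $\rho'(t_m(\vec{z}))\trans[a_m]\nu$ with $\sem{\rho(\mu_m^{\vec{z}})}{}\relR\nu$, and I choose $\rho'(\mu_m^{\vec{z}})$ with $\sem{\rho'(\mu_m^{\vec{z}})}{}=\nu$. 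For negative premises $t_n(\vec{z})\ntrans[b_n]$, $\relR$-related terms enable the same actions, so $\rho(t_n(\vec{z}))\ntrans[b_n]$ transfers to $\rho'(t_n(\vec{z}))\ntrans[b_n]$; here completeness of $P$ (its model is $2$-valued) is what turns ``no certain transition'' into genuine absence of a transition.

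The quantitative premises are where the format restrictions earn their keep, and I expect this to be the main obstacle. Each premise $\theta_l(Y_l)\gtgeq_{l,k}p_{l,k}$ measures the set of terms obtained by instantiating the binding variables $Y_l$; the $\Diag$ construction $\mathcal{Z}=\diag{Y_l}_{l\in L}\times\prod_{\zeta\in W}\{\zeta\}$ together with conditions~\ref{item:conditions_on_cardinality}, \ref{item:conditions_on_nonrepeating_z} and~\ref{item:conditions_on_variables_of_theta} guarantees that there are enough freely instantiable variables $\mu_m^{\vec{z}}$ and $y\in Y_l$ that the instantiated measurable set is, up to the matching, an $\closed{\relR}$ set rather than an arbitrary finite collection of terms. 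Consequently the choices of $\rho'(\mu_m^{\vec{z}})$ made above can be arranged to send each such set to an $\relR$-equivalent one, and since matched distributions assign equal probability to every $\closed{\relR}$ set we obtain $\sem{\rho(\theta_l)}{}(\rho(Y_l))=\sem{\rho'(\theta_l)}{}(\rho'(Y_l))$, so every quantitative premise survives the passage from $\rho$ to $\rho'$. The delicate point is to make all these per-premise choices \emph{simultaneously and consistently}---respecting the possibly infinite and overlapping lookahead permitted by the format while keeping $\rho'$ a single well-defined proper substitution---and then to conclude $\sem{\rho(\theta)}{}\relR\sem{\rho'(\theta)}{}$ from the compatibility of the lifting with $\delta$, $\oplus_p$ and $\dist{f}$ established at the outset. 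The \ntmuxt{} case is handled identically, with $\{\zeta_1,\dots,\zeta_{\rank(f)}\}$ replaced by the single source variable $x$.
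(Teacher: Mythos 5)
The paper itself does not prove this theorem; it defers the full proof to~\cite{DGL15a}. Your overall strategy --- show that the congruence closure $\relR$ of ${\bisim}$ is itself a bisimulation, lift $\relR$ to distributions and check compatibility with $\delta$, $\oplus_p$ and $\dist{f}$, rebuild the substitution $\rho'$ by recursion along the dependency graph $G_r$, and use the $\Diag$ construction together with the cardinality conditions to push the quantitative premises through --- is indeed the strategy of that proof and of the classical \ntyfxt{} congruence proofs it generalizes. So the route is the right one.

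The genuine gap is the induction measure, and it bites in exactly the two places you treat most quickly. First, to ``match the transition $\rho(t_m(\vec{z}))\trans[a_m]\sem{\rho(\mu_m^{\vec{z}})}{}$'' you need the \emph{transfer property} for the pair $(\rho(t_m(\vec{z})),\rho'(t_m(\vec{z})))$, not merely its membership in $\relR$; since $t_m(\vec{z})$ is an arbitrary open term (this is precisely what the ``nt'' part and the lookahead buy), the instantiated pair need not be smaller than $(f(\vec{p}),f(\vec{q}))$ in the generation order of $\relR$, so ``by the inductive hypothesis'' does not apply as stated. Second, ``$\relR$-related terms enable the same actions'' is itself an instance of the transfer property, so invoking it to discharge the negative premises is circular. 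Both problems are resolved in the literature by inducting not on the generation of $\relR$ but on the well-founded proof of the transition $p\trans[a]\theta$, interleaved with the transfinite approximants $\tuple{\CerTr_\alpha,\PosTr_\alpha}$ of Lemma~\ref{lem:inductive-construction-of-3v-least-model}: negative premises needed at stage $\alpha$ are discharged against $\PosTr_{\alpha-1}$ (completeness is what collapses the two components at the fixpoint), while positive premises are matched by the induction hypothesis on strictly smaller subproofs. Your recursion along $G_r$ handles the lookahead inside a single rule instance but not this recursion across proofs, so the argument does not close. The quantitative-premise paragraph likewise stops short of an argument --- one must actually exhibit the instantiation of the infinitely many $y\in Y_l$ so that $\rho'(Y_l)$ exhausts an $\closed{\relR}$ set matching $\rho(Y_l)$ --- but that is a matter of detail once the double induction is set up correctly.
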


The \ntmufxt\ format is still too general to preserve the other
(weaker) bisimulation equivalences presented in
Sec.~\ref{sec:semantic_relations}.  In the reminder of the section, we
will discuss through appropriate examples how the \ntmufxt\ format
should be further restricted or modified so that the other
bisimulation equivalences are congruences for the resulting restricted
formats.

We focus first on convex bisimulation. For this consider the terms
$\te_1 = a.(\dist{b.\nullproc}) + a.(\dist{c.\nullproc})$ and
$\te_2 = \te_1 + a.(\dist{b.\nullproc} \oplus_{0.5} \dist{c.\nullproc})$.
Notice that $\te_1 \bisim_c \te_2$.
Consider a possible extension of our running example with a unary
operator $f$ with the following \ntmuft\ rule:
{\small%
\begin{equation}\label{eq:convex:ex1}
  \dedrule{x\trans[a]\mu \quad
           \mu(Y)\geq 0.5 \quad \{y\trans[b]\mu_y \mid y \in Y\} \quad
           \mu(Y')\geq 0.5 \quad \{y'\trans[c]\mu_{y'} \mid y' \in Y'\}}
          {f(x)\trans[a]\dist{\nullproc}}
\end{equation}%
}%
Since
$\te_2 \trans[a] (\dist{b.\nullproc}\oplus_{0.5}\dist{c.\nullproc})$,
$f(\te_2)\trans[a]\dist{\nullproc}$.  However it is easy to see that
$f(\te_1)$ cannot perform any transition.
Therefore $f(\te_1) \not\bisim_c f(\te_2)$.

The problem arises precisely because, in order to show that $\te_1
\bisim_c \te_2$, transition
$\te_2 \trans[a] (\dist{b.\nullproc}\oplus_{0.5}\dist{c.\nullproc})$
is matched with the appropriate convex combination of the transitions
$\te_1\trans[a]\dist{b.\nullproc}$ and $\te_1\trans[a]\dist{c.\nullproc}$.
Thus, we need that a quantitative premise guarantees that the test is
produced on a convex combination of target distributions rather than
on a single target distribution.
An appropriate modification of such rule would be to replace it by a
family of rules of the form
{\small%
\[\dedrule{\{x\trans[a]\mu_n \mid n{\in}\N \} \ \ \,
           \left(\bigoplus_{n{\in}\N}[p_n]\mu_n\right)(Y)\geq 0.5 \ \ \, \{y\trans[b]\mu_y \mid y {\in} Y\} \ \ \,
           \left(\bigoplus_{n{\in}\N}[p_n]\mu_n\right)(Y')\geq 0.5 \ \ \, \{y'\trans[c]\mu_{y'} \mid y' {\in} Y'\}}
          {f(x)\trans[a]\dist{\nullproc}},
\]%
}%
one for each $\{p_n\}_{n\in\N}$ such that $\sum_{n\in\N}p_n = 1$ and
each $p_i\in[0,1]\cap\Q$.

\remarkMDL{This problem is related to the fact that d-sort positions 
do not distribute wrt. $\oplus$. See the last paragraph of Section 2.}
Consider now that the semantic of $f$ is defined by the rule
{\small%
\begin{equation}\label{eq:convex:ex2}
  \dedrule{x \trans[a] \mu}{f(x) \trans[a] \dist{a.}\mu}
\end{equation}%
}%
and notice that
$f(\te_2)\trans[a]\dist{a.}(\dist{b.\nullproc}\oplus_{0.5}\dist{c.\nullproc})$.
However, the only two possible transitions for $f(\te_1)$ are
$f(\te_1)\trans[a]\dist{a.b.\nullproc}$ and
$f(\te_1)\trans[a]\dist{a.c.\nullproc}$, and there is no $p\in[0,1]$
such that $\dist{a.b.\nullproc}\oplus_p\dist{a.c.\nullproc} \bisim_c
\dist{a.}(\dist{b.\nullproc}\oplus_{0.5}\dist{c.\nullproc})$.
For this reason, we will require that a target of a positive premise
does not appear in a \DTS-sorted position of a subterm in the target of
the conclusion.
\remarkPRD{%
  Couldn't we have considered, instead, the family of rules
  $\displaystyle
  \dedrule{\{x\trans[a]\mu_n \mid n{\in}\N \}}
  {f(x) \trans[a] \dist{a.}\left(\bigoplus_{n{\in}\N}[p_n]\mu_n\right)}$ ?}

For the next example, we consider an additional unary \DTS-sorted
operator $g$ and the following rules
{\small%
\begin{equation}\label{eq:convex:ex3}
  \dedrule{x\trans[a]\mu \qquad g(\mu)\trans[b]\mu'}
          {f(x)\trans[a]\dist{\nullproc}}
  \qquad\qquad
  \dedrule{\mu(Y)>0 \quad \{y\trans[b]\mu \mid y\in Y\} \quad
           \mu(Y')>0 \quad \{y'\trans[c]\mu' \mid y'\in Y'\}}
          {g(\mu) \trans[b] \dist{\nullproc}}
\end{equation}%
}%
Notice that
$g(\dist{b.\nullproc}\oplus_{0.5}\dist{c.\nullproc})\trans[b]\dist{\nullproc}$.  Therefore $f(\te_2)\trans[a]\dist{\nullproc}$.
However, neither $g(\dist{b.\nullproc})$ nor 
$g(\dist{c.\nullproc})$ can perform any transition, and as a consequence 
$f(\te_1)$ cannot perform any transition either.
Hence, $f(\te_1) \not\bisim_c f(\te_2)$.
For this reason we will require that a target of a positive premise
does not appear in the source of a positive or negative premise.
\remarkPRD{%
  Similarly as before, why not consider the family of rules
  $\displaystyle
  \dedrule{\{x\trans[a]\mu_n \mid n{\in}\N \} \quad g\left(\bigoplus_{n{\in}\N}[p_n]\mu_n\right)\trans[b]\mu'}
          {f(x)\trans[a]\dist{\nullproc}}$
  instead?}

Suppose now that $g$ is a binary \TS-sorted operator and consider the
following rules
{\small%
\begin{equation}\label{eq:convex:ex4}
  \dedrule{x\trans[a]\mu}{f(x)\trans[a]\dist{g}(\mu,\mu)}
  \qquad\qquad
  \dedrule{x_1 \trans[b] \mu_1 \qquad x_2 \trans[c] \mu_2}
          {g(x_1,x_2) \trans[a] \dist{\nullproc}}
\end{equation}%
}%
Notice that the only possible transitions for $f(\te_1)$ are
$f(\te_1)\trans[a]\dist{g}(\dist{b.\nullproc},\dist{b.\nullproc})$ and
$f(\te_1)\trans[a]\dist{g}(\dist{c.\nullproc},\dist{c.\nullproc})$.
Moreover, notice that
$\dist{g}(\dist{b.\nullproc},\dist{b.\nullproc}) \bisim_c
\dist{g}(\dist{c.\nullproc},\dist{c.\nullproc}) \bisim_c
\dist{\nullproc}$.
However,
$f(\te_2)\trans[a]\dist{g}(\dist{b.\nullproc}\oplus_{0.5}\dist{c.\nullproc},\dist{b.\nullproc}\oplus_{0.5}\dist{c.\nullproc})$,
and it is not difficult to see that
$\dist{g}(\dist{b.\nullproc}\oplus_{0.5}\dist{c.\nullproc},\dist{b.\nullproc}\oplus_{0.5}\dist{c.\nullproc})
\bisim_c (\dist{a.\nullproc}\oplus_{0.25}\dist{\nullproc})$.
Therefore, $f(\te_1) \not\bisim_c f(\te_2)$.
In this case, the problem seems to arise because the same distribution
variable occurs in the target of the conclusion of the first rule in
two different \TS-sorts positions of the target distribution term.
However, the problem is not so general.  Notice that if the target in
the conclusion is replaced by the term
$\dist{g}(\mu,\dist{c.\nullproc})\oplus_p\dist{g}(\dist{b.\nullproc},\mu)$
we would have $f(\te_1) \bisim_c f(\te_2)$.
The difference arises from the fact that in the interpretation of $\dist{g}(\theta,\theta)$ the probability distribution
$\sem{\theta}{}$ multiplies with itself.  This is not the case in the
interpretation of
$\dist{g}(\theta,\dist{c.\nullproc})\oplus_p\dist{g}(\dist{b.\nullproc},\theta)$
where the two instances of $\sem{\theta}{}$ are summed up.
Thus, we will actually request that the target of the conclusion is
\emph{linear} with respect to each distribution variable on a target
of a positive premise.

\begin{definition}
  \label{def:linear_term}
  A distribution term $\theta \in \openDTerms$ is \emph{linear for a
    set $V\subseteq\DVar$} if
  \begin{inparaenum}[(i)]
  \item%
    $\theta \in \closedDTerms \cup \DVar \cup \{\delta(x) \mid x \in \TVar\}$.
  \item%
    $\theta = \bigoplus_{i \in I} [p_i]\theta_i$ and $\theta_i$ is
    linear for $V$, for all $i\in I$,
  \item%
    $\theta = f(\theta_1, \ldots, \theta_n)$,
    for all $i\in I$, $\theta_i$ is linear for $V$, and
    $\Var(\theta_i)\cap\Var(\theta_j)\cap V=\emptyset$,
    for all $i,j\in\{1,\ldots,n\}$ and $i \neq j$,
  \end{inparaenum}
\end{definition}

\begin{definition}\label{def:convex_format}
  Let $P = (\Sigma, A, R)$ be a PTSS.  A rule $r\in R$ is in
  \emph{convex \ntmuft\ format} if has the form
  \[\ddedrule{%
    \begin{array}{c}
      {\textstyle
	\bigcup_{m\in M}
        \{ t_m(\vec{z})\trans[a_m]\mu_m^{\vec{z}} \mid \vec{z}\in \mathcal{Z} \} \qquad
        \bigcup_{n\in N}
        \{ t_n(\vec{z})\ntrans[b_n] \mid \vec{z}\in \mathcal{Z} \} 
      } \\[1ex]
      {\textstyle
	\bigcup_{\tm \in \tM} \{t_{\tm}(\vec{z}_{\tm}) \trans[a_\tm] \mu^\tm_i \mid i \in\N\} \qquad
	\bigcup_{\tm \in \tM} \{\big(\bigoplus_{i\in\N} [p^\tm_i] \mu^\tm_i\big) (Y_{l}) \gtgeq_{l,k} p_{l,k} \mid l \in L_\tm, k \in K_l\}
      } 
    \end{array}
  }%
  {f(\zeta_1,\ldots,\zeta_{\rank(f)}) \trans[a] \theta }
  \]
  with
  $L = \cup_{\tm \in \tM} L_\tm$, $L_\tm\cap L_{\tm'} = \emptyset$
  whenever $\tm\neq\tm'$,
  ${\gtgeq_{l,k}} \in \{>, \geq\}$ for all $l\in L$ and $k\in K_l$,
  $\mathcal{Z} = \diag{Y_l}_{l\in L} \times \prod_{\zeta \in W}\{\zeta\}$, with 
  $W\subseteq \TVar \cup \DVar \backslash \bigcup_{l\in L} Y_l$. 
  In addition, it should also satisfy
  conditions~\ref{item:conditions_on_cardinality}
  to~\ref{item:conditions_on_variables_of_theta} in
  Def.~\ref{def:ntmufxtheta} and the following extra conditions:
  \begin{compactenum}
    \setcounter{enumi}{6}%
  \item\label{item:convex:condition_on_scalars}%
    For every $\tm\in\tM$, the family
    $\{p^\tm_i\}_{i\in\N} \subseteq {[0,1]\cap\Q}$ and
    $\sum_{i \in\N} p^\tm_i = 1$
  \item\label{item:convex:condition_on_auxiliary_pp}%
    For every $\tm \in \tM$, there is exactly one $j\in\N$ such that
    $\mu^\tm_j = \mu^{\vec{z}}_m$  for some $m \in M$ and
    $\vec{z} \in \mathcal{Z}$, in which case also
    $t_\tm(\vec{z}_\tm)\trans[a_\tm]\mu^\tm_j =
    t_m(\vec{z})\trans[a_m]\mu^{\vec{z}}_m$.
    Moreover
    $\{\mu^\tm_i\mid i\in\N\} \cap \{\mu^{\tm'}_i\mid i\in\N\} =
    \emptyset$ for all $\tm\neq\tm'$, and
    $\{\mu^\tm_i\mid i\in\N\} \cap \{\zeta_1,\ldots,\zeta_{\rank(f)}\}
    = \emptyset$.
  \item\label{item:convex:condition_on_limited_use_of_mu}%
    No variable $\mu_m^{\vec{z}}$, with $m\in M$ and
    $\vec{z}\in\mathcal{Z}$, appears in the source of a premise
    (i.e.\ in the set $W$) or in a \DTS-sorted position of a subterm in
    the target of the conclusion $\theta$.
  \item\label{item:convex:condition_on_target}%
    $\theta$ is linear for
    $\{\mu^{\vec{z}}_m \mid m \in M, \vec{z} \in \mathcal{Z} \}$.
  \end{compactenum}
  A rule $r\in R$ is in \emph{convex \ntmuxt{} format} if its form is
  like above but has a conclusion of the form $x \trans[a] \theta$ and
  it satisfies the same conditions, except that whenever we write
  $\{\zeta_1,\ldots,\zeta_{\rank(f)}\}$, we should write $\{x\}$.
  A set of convex \ntmufxt\ rules $R$ is \emph{convex closed} if for
  all $r\in R$, for any term $\bigoplus_{i\in\N} [p^\tm_i] \mu^\tm_i$
  appearing in a quantitative premise of $r$ and any family
  $\{q_i\}_{i\in\N} \subseteq {[0,1]\cap\Q}$ such that $\sum_{i \in\N}
  q_i = 1$, then the rule $r'$ obtained by replacing each occurrence
  of $\bigoplus_{i\in\N} [p^\tm_i] \mu^\tm_i$ in $r$ by
  $\bigoplus_{i\in\N} [q_i] \mu^\tm_i$ is also in $R$.
  A PTSS $P = (\Sigma, \Act, R)$ is in \emph{convex \ntmufxt{} format}
  if all rules in $R$ are in convex \ntmufxt{} format and $R$ is
  convex closed.
\end{definition}

The problem indicated in rule~(\ref{eq:convex:ex1}) is attacked
with the requirement of having sets
$\{t_{\tm}(\vec{z}_{\tm})\trans[a_\tm]\mu^\tm_i \mid i \in\N\}$ as
positive premises with which the convex closures $\bigoplus_{i\in\N}
[p^\tm_i] \mu^\tm_i$ can be constructed, plus the request that the set
of rules is convex closed.  Notice that
condition~\ref{item:convex:condition_on_auxiliary_pp} states that
these sets of positive premises are only used to construct such
distribution terms and are only linked to the ``actual'' positive
premises in
$\bigcup_{m\in M} \{ t_m(\vec{z})\trans[a_m]\mu_m^{\vec{z}} \mid
\vec{z}\in \mathcal{Z} \}$
through a single transition $t_{\tm}(\vec{z}_{\tm})\trans[a_\tm]\mu^\tm_j$.
\remarkPRD{Revise text related to
  condition~\ref{item:convex:condition_on_auxiliary_pp}}

Rules like~(\ref{eq:convex:ex2}) and the left rule on~(\ref{eq:convex:ex3}) are
excluded on condition~\ref{item:convex:condition_on_limited_use_of_mu}
since no variable of a positive premise can be used in the source of
a premise (excluding~(\ref{eq:convex:ex3})) or in a \DTS-sort position
in the target of the conclusion (excluding~(\ref{eq:convex:ex2})).
Finally, rules like on the left of~(\ref{eq:convex:ex4}) are excluded
by requesting that the target of the conclusion is linear
(condition~\ref{item:convex:condition_on_target}).

Now, we can state the congruence theorem for convex bisimulation
equivalence.

\begin{theorem}\label{th:convex_bisimulation_congruence}
  Let $P$ be a complete well-founded PTSS in convex \ntmufxt\ format.  Then,
  convex bisimulation equivalence is a congruence for all operators
  defined by $P$.
\end{theorem}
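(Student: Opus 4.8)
The plan is to follow the strategy used for Theorem~\ref{th:bisimulation_congruence} in~\cite{DGL15a}, modifying only the parts where the convex combinations and the combined transitions of Def.~\ref{def:convex_bisimulation} intervene. First I would take as candidate relation the \emph{congruence closure} $\relR$ of $\bisim_c$, i.e.\ the least relation that contains $\bisim_c$ and is closed under every operator of $\Sigma$; explicitly, $s \relR s'$ holds iff $s = C[t_1,\ldots,t_n]$ and $s' = C[t_1',\ldots,t_n']$ for some state context $C$ and terms with $t_i \bisim_c t_i'$. Since $\bisim_c$ is a symmetric convex bisimulation (recalled in Sec.~\ref{sec:semantic_relations}), $\relR$ is symmetric and $\bisim_c \subseteq \relR$. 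The theorem then reduces to the single claim that $\relR$ is a convex bisimulation: once this is shown, the fact that $\bisim_c$ contains every convex bisimulation gives $\relR \subseteq \bisim_c$, hence $\relR = \bisim_c$, which is exactly the assertion that $\bisim_c$ is preserved by all operators of $P$.

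To prove that $\relR$ satisfies the transfer property of Def.~\ref{def:convex_bisimulation} I would argue by well-founded induction along the ordinal stratification of the least $3$-valued stable model (Lemma~\ref{lem:inductive-construction-of-3v-least-model}), using completeness of $P$ to identify that model with the associated transition relation $\Tr$. Take $s \relR s'$ and $s \trans[a] \pi$. If $s \bisim_c s'$ the required combined transition is delivered directly by convex bisimilarity. Otherwise $s = f(\vec{t})$ and $s' = f(\vec{t'})$ with $t_i \relR t_i'$, and, since $P$ is complete, the transition $s \trans[a] \pi$ must be the conclusion of a closed instance $\rho(r)$ of some rule $r$ in convex \ntmuft\ (or \ntmuxt) format, all of whose premises hold in $\Tr$, with $\rho(\zeta_i) = t_i$ and $\sem{\rho(\theta)}{} = \pi$. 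The task is to build, from $\rho$, a matching combined transition $s' \trans[a]_c \pi'$ with $\pi \relR \pi'$. I would do so by constructing a closed substitution $\rho'$ that coincides with $\rho$ on all variables except on the $\zeta_i$ (where $\rho'(\zeta_i) = t_i'$) and on the premise targets, chosen so that $\rho'$ validates a convex-closed variant of $r$. Because $\rho$ and $\rho'$ differ only by replacing each $t_i$ with the $\relR$-related $t_i'$, every premise source $\rho(t_m(\vec{z}))$ and its counterpart $\rho'(t_m(\vec{z}))$ are $\relR$-related, and the transitions the former engage in lie at strictly earlier strata, so the induction hypothesis applies to them.

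The convex-specific work concentrates in three places. When matching an actual positive premise, the induction hypothesis yields only a \emph{combined} transition $\rho'(t_m(\vec{z})) \trans[a_m]_c \pi^*$ with $\sem{\rho(\mu_m^{\vec z})}{} \relR \pi^*$; this is precisely why the conclusion for $s'$ must be allowed to be combined. Writing $\pi^*$ as a countable convex sum of genuine transitions of $\rho'(t_m(\vec{z}))$, the families of auxiliary premises $\{t_{\tm}(\vec{z}_{\tm}) \trans[a_\tm] \mu_i^{\tm} \mid i \in \N\}$ absorb these summands into the distribution variables $\mu_i^{\tm}$, and convex closedness of $R$ (Def.~\ref{def:convex_format}) supplies the rule variant in which each scalar family $\{p_i^{\tm}\}$ is replaced by the matching weights, so that $\bigoplus_{i}[p_i^{\tm}]\mu_i^{\tm}$ evaluates under $\rho'$ to $\pi^*$. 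The quantitative premises $(\bigoplus_i [p_i^{\tm}]\mu_i^{\tm})(Y_l) \gtgeq_{l,k} p_{l,k}$ are then preserved because $\pi^*$ agrees with $\sem{\rho(\bigoplus_i [p_i^{\tm}]\mu_i^{\tm})}{}$ on every $\relR$-closed set, while the diagonal lookahead structure $\mathcal{Z} = \diag{Y_l}_{l\in L} \times \prod_{\zeta\in W}\{\zeta\}$ lets me match the lookahead premises coordinate by coordinate and arrange, through that matching, that $\rho(Y_l)$ and $\rho'(Y_l)$ are measured identically by $\relR$-related distributions. Finally, conditions~\ref{item:convex:condition_on_limited_use_of_mu} and~\ref{item:convex:condition_on_target} guarantee that each premise target $\mu_m^{\vec z}$ occurs in $\theta$ only in \TS-sorted positions and linearly; hence $\sem{\rho'(\theta)}{}$ depends affinely on each such target, the combined matches propagate through $\theta$ to a convex combination of genuine conclusions of $f(\vec{t'})$, and $\pi \relR \pi'$ follows from linearity of the $\relR$-lifting on $\relR$-closed sets. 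Negative premises transfer by the symmetric use of the (inductively available) bisimulation property at earlier strata: a matching transition of $\rho'(t_n(\vec{z}))$ would force, via a combined transition, one of $\rho(t_n(\vec{z}))$, contradicting $\rho(t_n(\vec z)) \ntrans[b_n]$.

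The step I expect to be the genuine obstacle is the \emph{coherent simultaneous} realization of all these convex combinations. One must choose the weight families and the assignment of auxiliary variables $\mu_i^{\tm}$ to summands in a way that (a) yields a single well-defined convex-closed variant rule even when $\tM$ is infinite, (b) keeps the choices compatible across all lookahead coordinates $\vec{z} \in \mathcal{Z}$ identified through $\diag{Y_l}_{l\in L}$, and (c) does not destroy linearity of $\theta$ by forcing two occurrences to share a reused variable. Reconciling (a)--(c) at once, together with the verification that every quantitative premise survives the replacement, is where the real bookkeeping lies; the surrounding arguments---identification of the rule, transfer of negative premises, and the concluding linearity computation on $\theta$---are structurally the same as in the proof of Theorem~\ref{th:bisimulation_congruence}.
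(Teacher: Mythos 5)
The paper itself states Theorem~\ref{th:convex_bisimulation_congruence} without proof (only Theorem~\ref{th:bisimulation_congruence} has a proof, and that in the cited report~\cite{DGL15a}), so there is no in-paper argument to compare yours against line by line. Judged on its own terms, your outline is the right \emph{shape}: congruence closure of $\bisim_c$, reduction to showing that closure is a convex bisimulation, reconstruction of a matching rule instance for $f(\vec{t'})$, and a correct assignment of each condition of Def.~\ref{def:convex_format} to the counterexample it is designed to kill (the auxiliary premise families and convex closedness against rule~(\ref{eq:convex:ex1}), condition~\ref{item:convex:condition_on_limited_use_of_mu} against~(\ref{eq:convex:ex2}) and~(\ref{eq:convex:ex3}), linearity against~(\ref{eq:convex:ex4})). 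That matches the strategy the paper signals, namely an adaptation of the proof of Theorem~\ref{th:bisimulation_congruence}.

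Two genuine gaps remain. First, and most importantly, you have not proved the theorem: you explicitly defer the one step that is specific to the convex case, namely the \emph{simultaneous} choice, for all $\tm\in\tM$ and all coordinates $\vec{z}\in\mathcal{Z}$ at once, of the weight families $\{q_i^{\tm}\}$, the assignment of the summands of each combined transition to the auxiliary variables $\mu_i^{\tm}$, and the values $\rho'(y)$ for $y\in Y_l$ that make every quantitative premise $\big(\bigoplus_i[q_i^{\tm}]\mu_i^{\tm}\big)(Y_l)\gtgeq_{l,k}p_{l,k}$ hold again. Note also that $\rho(Y_l)$ is just a set of closed terms, not an $\relR$-closed set, so ``agreement on $\relR$-closed sets'' does not by itself transfer the quantitative premises; one must pick, for each point of positive mass in $\rho(Y_l)$, enough $\relR$-related points of $\pi^*$ to recover at least the mass $p_{l,k}$, and do so consistently with the lookahead premises indexed by the same $\vec{y}\in\Diag$. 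This is exactly where the proof lives, and a sketch that labels it ``the real bookkeeping'' has not discharged it. Second, your induction measure is mis-stated: in the construction of Lemma~\ref{lem:inductive-construction-of-3v-least-model}, the positive premises of the rule instance deriving $s\trans[a]\pi$ are discharged \emph{inside the same proof tree}, not at a strictly earlier ordinal stratum, so ``the transitions the former engage in lie at strictly earlier strata'' is false as written. The induction hypothesis for positive premises must come from a well-founded induction on the proof tree (or on the dependency order given by well-foundedness of the rules), with the ordinal stratification used only to handle negative premises; as stated, your induction does not go through.
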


We focus now on the probability abstracted bisimulation.  Notice that
the terms
$\te_3 = a.(\dist{b.\nullproc} \oplus_{0.5} \dist{c.\nullproc})$ and
$\te_4 = a.(\dist{b.\nullproc} \oplus_{0.1} \dist{c.\nullproc})$
are probability abstracted bisimilar, i.e., $\te_3 \bisim_a \te_4$.
Consider now the unary operator $f$ whose semantics is defined with
rule~(\ref{eq:convex:ex1}).  It should not be difficult so see that 
$f(\te_3)\trans[a]\dist{\nullproc}$ while
$f(\te_4)$ cannot perform any transition.
Therefore $f(\te_3) \not\bisim_a f(\te_4)$.
The problem is a consequence of the fact that the quantitative premises are tested
against non-zero values which may distinguish distributions with the
same support set but mapping into different probability values.
Thus, in order to preserve probability abstracted bisimulation
equivalence, the only extra restriction that we ask to a rule in
\ntmufxt\ format is that none of its quantitative premises test
against a value different from 0.

\begin{definition}\label{def:abstract_format}
  A PTSS $P = \tuple{\Sigma,A,R}$ is in
  \emph{probability abstracted \ntmufxt\ format} if
  it is in \ntmufxt\ format and
  for every rule $r \in R$ and quantitative premise
  ${\theta(Y)\gtgeq p} \in \qprem{r}$, $p=0$.
\end{definition}

The proof of the congruence theorem for probability abstracted bisimulation equivalence (Theorem~\ref{th:prob_abstracted_format} below) follows closely the
lines of the proof of Theorem~\ref{th:bisimulation_congruence} as
given in \cite{DGL15a}.

\begin{theorem}\label{th:prob_abstracted_format}
  Let $P$ be a complete well-founded PTSS in probability abstracted
  \ntmufxt\ format.  Then, the probability abstracted bisimulation
  equivalence is a congruence for all operators defined in $P$.
\end{theorem}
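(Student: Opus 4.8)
The plan is to mimic the proof of Theorem~\ref{th:bisimulation_congruence} from~\cite{DGL15a}, replacing the exact‑probability matching of target distributions by the \emph{abstract} matching of Lemma~\ref{lemma:altdef_p_abst_bisim}. Concretely, I would let $\relR$ be the smallest symmetric relation containing $\bisim_a$ and closed under all operators of $\Sigma$ (the congruence closure of $\bisim_a$), and aim to verify the transfer condition of Lemma~\ref{lemma:altdef_p_abst_bisim} for $\relR$: whenever $s \relR s'$ and $s \trans[a] \pi$ there is $\pi'$ with $s' \trans[a] \pi'$ and $\pi \equiv^w_\relR \pi'$. Since $\bisim_a$ is already a probability abstracted bisimulation and an equivalence (Lemma~\ref{lemma:p_abst_bisim_equiv}), establishing this property yields $\relR \subseteq \bisim_a$, and therefore that $f(\vec{t}) \bisim_a f(\vec{t'})$ whenever $t_i \bisim_a t'_i$ componentwise, i.e.\ that $\bisim_a$ is a congruence.

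The core is the transfer step, which I would prove by analysing the proof of $s \trans[a] \pi$. In the non‑trivial case $s = f(\vec t)$, $s' = f(\vec{t'})$ with $t_i \relR t'_i$ (the base case $s \bisim_a s'$ is Lemma~\ref{lemma:p_abst_bisim_equiv}), the transition comes from a rule $r$ of $P$ in probability abstracted \ntmufxt\ format via a proper closed substitution $\sigma$ with $\sigma(\zeta_i) = t_i$ and $\sem{\sigma(\theta)}{} = \pi$. Using well‑foundedness of $r$, I would build a substitution $\sigma'$ with $\sigma'(\zeta_i) = t'_i$ by recursion along the dependency graph $G_r$: each positive premise $t_m(\vec{z})\trans[a_m]\mu^{\vec{z}}_m$ is matched by applying the abstract bisimulation transfer property at the already‑matched sources, which fixes $\sigma'(\mu^{\vec{z}}_m)$ and records a witnessing abstract weight function; negative premises $t_n(\vec{z})\ntrans[b_n]$ transfer because the symmetric relation $\relR$ relates terms with identical action capabilities, a consequence of the transfer property itself, made non‑circular by the stratified induction of Lemma~\ref{lem:inductive-construction-of-3v-least-model}, exactly as in Theorem~\ref{th:bisimulation_congruence}. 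Enough fresh distribution variables are supplied by the abundance of the $\mu^{\vec{z}}_m$, guaranteed by condition~\ref{item:conditions_on_cardinality} of Def.~\ref{def:ntmufxtheta} and the $\dupl$‑style duplication of the original proof.

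The single place where the argument genuinely departs from the exact‑probability proof is the quantitative premises, and this is exactly where Def.~\ref{def:abstract_format} pays off: every quantitative premise of $r$ has the form $\theta_l(Y_l)\gtgeq 0$, hence is either vacuous or a pure positivity test $\sem{\sigma(\theta_l)}{}(\sigma(Y_l)) > 0$. Because each witnessing abstract weight function preserves positivity on $\relR$‑related supports, I would choose the values $\sigma'(Y_l)$ so that $\sem{\sigma'(\theta_l)}{}(\sigma'(Y_l)) > 0$ and so that the lookahead premises indexed by $\vec z \in \mathcal{Z}$ remain matched; conditions~\ref{item:conditions_on_nonrepeating_z} and~\ref{item:conditions_on_variables_of_theta} ensure these choices do not over‑constrain the target $\theta$. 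Finally, assembling the component weight functions through the structure of $\theta$—reusing the distributivity of probabilistic lifting over $\oplus$ in \TS‑sorted positions established at the end of Sec.~\ref{sec:preliminaries}—I would produce a single abstract weight function $w$ witnessing $\pi = \sem{\sigma(\theta)}{} \equiv^w_\relR \sem{\sigma'(\theta)}{} = \pi'$, which completes the transfer step and hence the congruence proof (completeness of $P$ guarantees that the associated $\trans$ is well defined throughout).

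I expect the main obstacle to be the combinatorial bookkeeping in constructing $\sigma'$ over the infinite, double‑duty variable sets $Y_l$: these variables simultaneously index the lookahead positive premises and are measured by the quantitative premises, so the support correspondence supplied by each abstract weight function must be turned into a consistent assignment of the $Y_l$ that respects every premise of $r$ at once while keeping $\sigma'$ proper. Making this precise is the technical crux; it is handled exactly as in~\cite{DGL15a}, the only genuine simplification being that positivity, rather than an exact probability value, is all that has to be transported—which is precisely what the restriction $p=0$ of Def.~\ref{def:abstract_format} arranges.
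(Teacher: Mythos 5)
Your proposal matches the paper's own (sketched) argument: the paper proves this theorem by following the proof of Theorem~\ref{th:bisimulation_congruence} from~\cite{DGL15a}, with the exact-weight-function matching replaced by the abstract weight function characterization of Lemma~\ref{lemma:altdef_p_abst_bisim}, and with the $p=0$ restriction of Def.~\ref{def:abstract_format} being precisely what lets the quantitative premises transfer as pure positivity tests. Your identification of the congruence-closure setup, the role of Lemma~\ref{lemma:altdef_p_abst_bisim}, and the single point of genuine departure is exactly the intended route.
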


Given the alternative definition of the probability obliterated
bisimulation provided by Lemma~\ref{lemma:pob_using_otrans}, we will
now consider simpler definitions for the quantitative premises for the
rule format associated to this relation.  Thus, we consider quantitative
premises of the form $\theta(\{y\})\gtgeq p$ rather than
$\theta(Y)\gtgeq p$.

Taking $\te_3$ and $\te_4$ as before, we have that $\te_3 \bisim_o \te_4$.
The same example of the unary operator $f$, whose semantics is defined
with a conveniently modify rule~(\ref{eq:convex:ex1}), shows that
$f(\te_3) \not\bisim_a f(\te_4)$ and hence the need that the quantitative
premises can only be tested against 0.

Let
$\te_5 = a.(\dist{b.\nullproc} \oplus_{0.5} \dist{c.\nullproc})$ and
$\te_6 = a.\dist{b.\nullproc} + a.\dist{c.\nullproc}$, and observe that
$\te_5 \bisim_o \te_6$.
Take rule~(\ref{eq:convex:ex2}) as the semantic definition for $f$.
Notice that
$f(\te_5)\trans[a]\dist{a.}(\dist{b.\nullproc} \oplus_{0.5} \dist{c.\nullproc})$
is the only transition for $f(\te_5)$, while the only
possible transitions for $f(\te_6)$ are
$f(\te_6)\trans[a]\dist{a.}\dist{b.\nullproc}$ and
$f(\te_6)\trans[a]\dist{a.}\dist{c.\nullproc}$.
Since
${a.\dist{b.\nullproc}} \not\bisim_o
{a.(\dist{b.\nullproc}\oplus_{0.5}\dist{c.\nullproc})} \not\bisim_o
{a.\dist{c.\nullproc}}$,
$f(\te_5) \not\bisim_o f(\te_6)$.
Like for the convex bisimulation case, this shows that the target of a
positive premise cannot appear in a \DTS-sorted position of a subterm
in the target of the conclusion.

Suppose now that the semantics of $f$ is defined with the rule
{\small%
\begin{equation}\label{eq:oblit:ex1}%
  \dedrule{x \trans[a] \mu \qquad \mu(\{y_1\}) > 0 \qquad \mu(\{y_2\}) > 0
           \qquad y_1 \trans[b] \mu_1 \qquad y_2 \trans[c] \mu_2}%
          {f(x) \trans[a] \dist{\nullproc}}
\end{equation}%
}%
Notice that $f(\te_5) \not\bisim_o f(\te_6)$ since
$f(\te_5)\trans[a]\dist{\nullproc}$ while $f(\te_6)$ cannot perform any
transition.
This is due to the fact that, by allowing the same distribution
variable $\mu$ to occur in different quantitative premises, we gain
some knowledge of the structure of (the instance of) $\mu$, in
particular of its support set.

Consider now that $f$ is defined with the left rule
in~(\ref{eq:convex:ex3}) and $g$ with an appropriate modification of
the right rule in~(\ref{eq:convex:ex3}).
Notice that $f(\te_5)\trans[a]\dist{\nullproc}$ but $f(\te_6)$ cannot
perform any transition.  Thus $f(\te_5) \not\bisim_o f(\te_6)$.  In this
case, we are also gaining knowledge of the support set of $\mu$, but
this time through the rule associated to the operator $g$.  Therefore
we require that a target of a positive premise does not appear in the
source of a positive or negative premise.

Consider now the rules
{\small%
\begin{equation}\label{eq:oblit:ex2}%
  \dedrule{x \trans[a] \mu \quad \mu(\{y\})>0 \quad y\trans[b]\mu'}%
          {f(x) \trans[a] \dist{g}(\mu)}
  \qquad
  \dedrule{x \trans[c] \mu}%
          {g(x) \trans[c] \dist{\nullproc}}
\end{equation}%
}%
Notice that the only transition for $f(\te_5)$ is
$f(\te_5)\trans[a]\dist{g}(\dist{b.\nullproc} \oplus_{0.5} \dist{c.\nullproc})$
and the only transition for $f(\te_6)$ is
$f(\te_6)\trans[a]\dist{g}(\dist{b.\nullproc})$.
Then $f(\te_5)\otrans[a]g(c.\dist{\nullproc})\otrans[c]{\nullproc}$ while
$f(\te_6)\otrans[a]g(b.\dist{\nullproc})$ is the only possible
``obliterated'' transition for $f(\te_6)$.
Then $f(\te_5) \not\bisim_o f(\te_6)$.
This is an alternative way of gaining information on the support set
of a possible instance of $\mu$ in~(\ref{eq:oblit:ex2}): on the one
hand, by the quantitative premise on the first rule, we deduce that
such instance has an element in the support set that performs a
$b$-transition and, on the other hand, by having $\mu$ as an argument
in the target of the conclusion, we may gather extra information from
the same instance of $\mu$ through the rules for the semantics of the
target of the conclusion (in this case, that $\mu$ has another element
in the support set that performs a $c$-transition.)
Therefore, we forbid that the target of a positive premise is
both tested in a quantitative premise and used in the target of the
conclusion.

Notice that the example in rules~(\ref{eq:convex:ex4}) also apply for
probability obliterated bisimulation since $\te_1\bisim_o \te_2$ but
$f(\te_1) \not\bisim_o f(\te_2)$ with exactly the same explanation.  Thus,
we also request that the target of the conclusion is linear for all
distribution variables on targets of positive premises.

Finally, consider a modification~(\ref{eq:convex:ex4}) where the left
rule is instead
{\small%
\begin{equation}\label{eq:oblit:ex3}%
  \dedrule{x \trans[a] \mu \quad \dist{g}(\mu,\mu)(\{y\})>0 \quad y\trans[a]\mu'}%
          {f(x) \trans[a] \dist{\nullproc}}
\end{equation}%
}%
It should not be difficult to observe that
$f(\te_5)\trans[a]\dist{\nullproc}$ but $f(\te_6)$ cannot perform any
transition.  Thus $f(\te_5) \not\bisim_o f(\te_6)$.  For this reason we
also require that the quantitative premises only allow linear
distribution terms.

\begin{definition}\label{def:prob_ob_format}
  Let $P = (\Sigma, \Act, R)$ be a well-founded PTSS.
  A rule $r\in R$ is in \emph{probability obliterated \ntmuft\ format} 
  if it has the form %
  \[
  \ddedrule%
      {
	{\textstyle
          \bigcup_{m\in M}
            \{ t_m \trans[a_m]\mu_m \} \quad \cup \quad
          \bigcup_{n\in N}
            \{ t_n \ntrans[b_n] \} \quad  \cup \quad
	  \bigcup_{l \in L} \{ \theta_l(\{y_l\}) >  0\}
	}
      }
      {f(\zeta_1,\ldots,\zeta_{\rank(f)}) \trans[a] \theta }
  \]
  where all variables $\zeta_1, \ldots, \zeta_{\rank(f)}$, $\mu_m$, with $m\in
  M$, and $y_l$, with $l\in L$, are different and the following
  restrictions are satisfied:
  \begin{compactenum}
  \item\label{item:oblivion:condition_on_limited_use_of_mu}%
    For all $m \in M$,
    $\Var(t_m)\cap\{\mu_{m'}\mid m'\in M \}=\emptyset$.
    Similarly, for all $n \in N$,
    $\Var(t_n)\cap\{\mu_{m'}\mid m'\in M \}=\emptyset$.
  \item\label{item:oblivion:condition_on_qprem}%

    For all $l \in L$, $\theta_l$ is linear for $\{\mu_{m'}\mid m'\in M \}$
    and, moreover, for all $l, l' \in L$ with $l \neq l'$,
    $\Var(\theta_{l}) \cap \Var(\theta_{l'}) \cap \{\mu_m \mid m \in M\}
    = \emptyset$.
  \item\label{item:oblivion:condition_on_target}%
    $\theta$ is linear for $\{\mu_{m'}\mid m'\in M \}$,
    $\Var(\theta) \cap \big(\bigcup_{l\in L}\Var(\theta_l)\big)
     \cap \{\mu_m \mid m \in M\} = \emptyset$,
    and no variable $\mu_m$ appear in a \DTS-sorted position of a
    subterm of the target of the conclusion $\theta$.
  \end{compactenum}
  A rule is in \emph{probability obliterated \ntmuxt{} format} if its
  form is like above but has a conclusion of the form $x \trans[a]
  \theta$.  $P$ is in \emph{probability obliterated \ntmufxt{} format}
  if all its rules are in probability obliterated \ntmufxt{} format.
\end{definition}

Condition~\ref{item:oblivion:condition_on_limited_use_of_mu} limits
the form to exclude rules like the one on the left
of~(\ref{eq:convex:ex3}).
Condition~\ref{item:oblivion:condition_on_qprem} requires that the
distribution terms on the quantitative premises are linear
(excluding~(\ref{eq:oblit:ex2})), and that they do not share
distributions variables on the target of positive premises
(excluding~(\ref{eq:oblit:ex1})).
Finally, condition~\ref{item:oblivion:condition_on_target} request
that the target of the conclusion is linear
(excluding~(\ref{eq:convex:ex4})) and does not have targets of
positive premises on \DTS-sorted positions
(excluding~(\ref{eq:convex:ex2})) nor if they are used in quantitative
premises (excluding~(\ref{eq:oblit:ex2})).

Finally, we state the congruence theorem for probability obliterated
bisimulation equivalence.

\begin{theorem}\label{th:prob_obliterated_format}
  Let $P$ be a complete well-founded PTSS in probability obliterated \ntmufxt{}
  format.  Then, probability obliterated bisimulation equivalence is a
  congruence for all operators defined by $P$.
\end{theorem}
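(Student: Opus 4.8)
The plan is to reduce the statement to an ordinary bisimulation congruence argument by exploiting Lemma~\ref{lemma:pob_using_otrans}, which identifies $\bisim_o$ with Park \& Milner's bisimulation on the labelled transition system over $\closedSTerms$ whose transitions $t\otrans[a]t'$ hold iff $t\trans[a]\pi$ and $t'\in\support(\pi)$ for some $\pi$. Following the scheme of Theorem~\ref{th:bisimulation_congruence}, I would take as candidate relation the \emph{congruence closure} $\relR$ of $\bisim_o$: the smallest relation on state terms containing $\bisim_o$ and closed under all operators of $\Sigma$ (with distribution-sorted arguments matched by the corresponding obliterated lifting, i.e.\ equal $\bisim_o$-support). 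Since $\bisim_o$ contains every probability obliterated bisimulation, it suffices to prove that $\relR$ is one; the congruence property $f(\vec t)\bisim_o f(\vec s)$ whenever $t_i\bisim_o s_i$ then follows because every such pair lies in $\relR\subseteq{\bisim_o}$.

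Using Lemma~\ref{lemma:pob_using_otrans}, the transfer property to establish is: if $t_1\relR t_2$ and $t_1\otrans[a]t_1'$, then $t_2\otrans[a]t_2'$ for some $t_2'$ with $t_1'\relR t_2'$. I would argue by induction on the derivation of the underlying transition $t_1\trans[a]\pi_1$ (with $t_1'\in\support(\pi_1)$), refined below to a transfinite induction, splitting on how $t_1\relR t_2$ was obtained. The base case $t_1\bisim_o t_2$ is immediate. In the congruence case $t_1=f(\vec u)$, $t_2=f(\vec v)$ with $u_i\relR v_i$, the transition comes from a closed instance $\rho(r)$ of a rule $r$ of Def.~\ref{def:prob_ob_format} with $\rho(\zeta_i)=u_i$ and $t_1'\in\support(\sem{\rho(\theta)}{})$. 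The first step is to \emph{decompose} the target point $t_1'$: because $\theta$ is linear for $\{\mu_m\}$ and, by condition~\ref{item:oblivion:condition_on_target}, no $\mu_m$ occurs in a \DTS-sorted position of $\theta$, the support of $\sem{\rho(\theta)}{}$ factors through independent choices $w_m\in\support(\sem{\rho(\mu_m)}{})$ along a single $\oplus$-branch; thus $t_1'$ determines, for each relevant $m$, a witness $\rho(t_m)\otrans[a_m]w_m$.

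I would then \emph{rebuild} the transition for $f(\vec v)$ by constructing a substitution $\rho'$ with $\rho'(\zeta_i)=v_i$, defining $\rho'$ on the premise variables in the well-founded order given by the dependency graph of $r$ (available since $P$ is well-founded). For a positive premise $t_m\trans[a_m]\mu_m$, the sources $\rho(t_m)$ and $\rho'(t_m)$ are $\relR$-related (congruence closure applied to the already-defined variables, using condition~\ref{item:oblivion:condition_on_limited_use_of_mu} to keep the $\mu_m$ out of premise sources), so the induction hypothesis applied to the strictly smaller derivation of $\rho(t_m)\otrans[a_m]w_m$ yields $\rho'(t_m)\otrans[a_m]w_m'$ with $w_m\relR w_m'$, which fixes a suitable $\rho'(\mu_m)$. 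Quantitative premises $\theta_l(\{y_l\})>0$ are matched by choosing $\rho'(y_l)$ in $\support(\sem{\rho'(\theta_l)}{})$, where linearity of $\theta_l$ and condition~\ref{item:oblivion:condition_on_qprem} guarantee that distinct premises do not share $\mu_m$-variables, so the choices stay consistent with the $w_m'$ and keep $\rho(y_l)\relR\rho'(y_l)$. Reassembling along the same $\oplus$-branch and using linearity of $\theta$ once more produces $t_2'\in\support(\sem{\rho'(\theta)}{})$ with $t_1'\relR t_2'$, whence $f(\vec v)\otrans[a]t_2'$. These are exactly the uses of the format conditions anticipated in the examples~(\ref{eq:convex:ex2})--(\ref{eq:oblit:ex3}) preceding Def.~\ref{def:prob_ob_format}.

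The hard part will be the \emph{negative premises}. For the rebuilt rule to fire we must show $\rho(t_n)\ntrans[b_n]$ implies $\rho'(t_n)\ntrans[b_n]$; the natural argument transfers a hypothetical $\rho'(t_n)\otrans[b_n]w'$ back across $\rho'(t_n)\relR\rho(t_n)$ to contradict $\rho(t_n)\ntrans[b_n]$, but that transition is not a subderivation of the current proof tree, so a plain induction on proof depth breaks. As in~\cite{Gro93,DGL15a}, I would resolve this by carrying out the whole argument along the stratification of the least three-valued stable model of Lemma~\ref{lem:inductive-construction-of-3v-least-model}: performing the induction on the ordinal $\alpha$ at which a transition enters $\CerTr_\alpha$ (with proof-tree depth as a secondary measure), so that the negative premises used at stage $\alpha$ refer only to the already-settled $\PosTr_{\alpha-1}$, where completeness of $P$ makes $\CerTr$ and $\PosTr$ coincide and legitimises the transfer of negative information. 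Managing this double induction, together with the simultaneous well-founded definition of $\rho'$ over lookahead variables, is the only genuinely delicate point; the remaining reasoning is a routine adaptation of the proof of Theorem~\ref{th:bisimulation_congruence}.
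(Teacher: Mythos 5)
The paper never actually writes out a proof of Theorem~\ref{th:prob_obliterated_format} (all three new congruence theorems are stated with a pointer to the proof scheme of Theorem~\ref{th:bisimulation_congruence} in~\cite{DGL15a}), so your sketch can only be judged against the route the paper gestures at --- and that is indeed the route you take: congruence closure, the $\otrans$ characterization of Lemma~\ref{lemma:pob_using_otrans}, decomposition and reassembly of support elements via linearity, and stratification over the least three-valued stable model for the negative premises. Most of this is sound, and your accounting of which condition of Def.~\ref{def:prob_ob_format} blocks which of the counterexamples~(\ref{eq:convex:ex2})--(\ref{eq:oblit:ex3}) is accurate.

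There is, however, a concrete gap at the step ``\ldots and keep $\rho(y_l)\relR\rho'(y_l)$.'' Your reassembly argument needs every element of $\support(\sem{\rho(\theta_l)}{})$ to factor into support elements $w_m$ of the $\rho(\mu_m)$'s, so that replacing each $w_m$ by the \emph{single} $\relR$-related witness $w_m'$ supplied by the induction hypothesis yields a related element of $\support(\sem{\rho'(\theta_l)}{})$. That factorization holds only when the $\mu_m$ occur at \TS-sorted positions of $\theta_l$. Condition~\ref{item:oblivion:condition_on_qprem} demands linearity but --- unlike the last clause of condition~\ref{item:oblivion:condition_on_target} for the target $\theta$ --- it does \emph{not} forbid a $\mu_m$ at a \DTS-sorted position of $\theta_l$; there $\sem{\dist{a.}\rho(\mu)}{}=\delta_{a.\rho(\mu)}$, so $\rho(y_l)$ contains $\rho(\mu_m)$ as a syntactic subterm, and $\rho(y_l)\relR\rho'(y_l)$ would require the \emph{entire} supports of $\rho(\mu_m)$ and $\rho'(\mu_m)$ to correspond, which the obliterated transfer property does not give you. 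This is not a repairable presentation issue in your write-up: the rule
\[
\dedrule{x \trans[a] \mu \quad (\dist{a.}\mu)(\{y\})>0 \quad y\trans[a]\nu \quad y\trans[a]\nu' \quad \nu(\{z\})>0 \quad \nu'(\{z'\})>0 \quad z\trans[b]\mu_1 \quad z'\trans[c]\mu_2}
         {f(x)\trans[a]\dist{\nullproc}}
\]
satisfies every stated condition of Def.~\ref{def:prob_ob_format} (all binding variables distinct, no $\mu$'s in premise sources, the terms $\dist{a.}\mu$, $\nu$, $\nu'$ linear and pairwise sharing no target variable, closed target, well-founded), yet with $\te_5$ and $\te_6$ as in the paper one gets $\rho(y)=\te_5$, $\rho(\nu)=\rho(\nu')=\rho(\mu)$, hence $f(\te_5)\trans[a]\dist{\nullproc}$ while $f(\te_6)$ has no transition, although $\te_5\bisim_o\te_6$. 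The copy of $\mu$ forced through the Dirac at a \DTS-sorted position circumvents exactly the sharing prohibition of condition~\ref{item:oblivion:condition_on_qprem}. So your proof (and the theorem as literally stated) needs the format to additionally forbid $\mu_m$ at \DTS-sorted positions of the $\theta_l$ (and, more generally, to prevent a $\mu_m$ captured by a $y_l$ from being re-observed); with that amendment your decomposition step goes through. Separately, your handling of negative premises is only a pointer to the standard double induction over the stratification; that is acceptable in a sketch, but be aware it is where most of the remaining technical work lies.
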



\section{Conclusion and Future Work}\label{sec:conclusion}

In this article, we presented three new rule formats that preserve
three different bisimulation equivalences coarser than Larsen \&
Skou's bisimulation.
These formats are more restricted variants of the \ntmufxt\ format and
notably, all of them can be seen as generalizations of the
non-probabilistic \emph{ntyft/ntyxt} format~\cite{Gro93,BG96}.
For completeness we mention two other similar results on PTSSs that
fall out of Larsen \& Skou's bisimulation.  They are~\cite{LV15}, that
presents a format for rooted branching bisimulation, and~\cite{Tin10},
that presents a format for non-expansiveness of
$\epsilon$-bisimulations.

Prior to the congruence theorems, we presented the different
bisimulation equivalences, compare them, and, in particular, we
gave a logic characterization for each of them.
The intention of presenting these logic characterizations is to use
them as the basis for the proof of full abstraction theorems~(see,
e.g., \cite{GV92,Gro93,DL12,DGL15a}.)
Full abstraction theorems are somewhat dual to the congruence theorems.
An equivalence relation is fully abstract with respect to a particular
format and an equivalence relation $\equiv$ if it is the largest
relation included in $\equiv$ that is a congruence for all operators
definable by any PTSS in that format.
In particular we are interested when $\equiv$ is the coarsest
reasonable behavioral equivalence, namely, (possibilistic) trace
equivalence.
We are busy now on trying to prove this results for the formats
presented here using the logic characterization as a means to
construct the so called \emph{testers}.
As the current point of our investigation, we do not
foresee major problems for all relations except for convex bisimulation
equivalences, for which we may need to relax some of the conditions of
the convex \ntmufxt\ format.

\bibliographystyle{eptcs}

\begin{thebibliography}{10}
\providecommand{\bibitemdeclare}[2]{}
\providecommand{\surnamestart}{}
\providecommand{\surnameend}{}
\providecommand{\urlprefix}{Available at }
\providecommand{\url}[1]{\texttt{#1}}
\providecommand{\href}[2]{\texttt{#2}}
\providecommand{\urlalt}[2]{\href{#1}{#2}}
\providecommand{\doi}[1]{doi:\urlalt{http://dx.doi.org/#1}{#1}}
\providecommand{\bibinfo}[2]{#2}

\bibitemdeclare{incollection}{AFV01:curretn-trends}
\bibitem{AFV01:curretn-trends}
\bibinfo{author}{L. \surnamestart Aceto\surnameend}, \bibinfo{author}{W.\
  \surnamestart Fokkink\surnameend} \& \bibinfo{author}{C. \surnamestart
  Verhoef\surnameend} (\bibinfo{year}{2001}):
  \emph{\bibinfo{title}{Conservative Extension in Structural Operational
  Semantics}}.
\newblock In: {\sl \bibinfo{booktitle}{Current Trends in Theoretical Computer
  Science}}, pp. \bibinfo{pages}{504--524}, \doi{10.1142/9789812810403\_0004}.

\bibitemdeclare{incollection}{AFV01}
\bibitem{AFV01}
\bibinfo{author}{L. \surnamestart Aceto\surnameend}, \bibinfo{author}{W.\
  \surnamestart Fokkink\surnameend} \& \bibinfo{author}{C. \surnamestart
  Verhoef\surnameend} (\bibinfo{year}{2001}): \emph{\bibinfo{title}{Structural
  operational semantics}}.
\newblock In: {\sl \bibinfo{booktitle}{Handbook of Process Algebra}},
  \bibinfo{publisher}{Elsevier}, pp. \bibinfo{pages}{197--292},
  \doi{10.1016/B978-044482830-9/50021-7}.

\bibitemdeclare{phdthesis}{Bai98}
\bibitem{Bai98}
\bibinfo{author}{C. \surnamestart Baier\surnameend}
  (\bibinfo{year}{1998}): \emph{\bibinfo{title}{On Algorithmics Verification
  Methods for Probabilistic Systems}}.
\newblock \bibinfo{type}{Habilitation thesis}, \bibinfo{school}{University of
  Mannheim}.

\bibitemdeclare{article}{Bar02}
\bibitem{Bar02}
\bibinfo{author}{F. \surnamestart Bartels\surnameend} (\bibinfo{year}{2002}):
  \emph{\bibinfo{title}{{GSOS} for Probabilistic Transition Systems}}.
\newblock {\sl \bibinfo{journal}{Electr. Notes Theor. Comput. Sci.}}
  \bibinfo{volume}{65}(\bibinfo{number}{1}), pp. \bibinfo{pages}{29--53},
  \doi{10.1016/S1571-0661(04)80358-X}.

\bibitemdeclare{phdthesis}{Bar04}
\bibitem{Bar04}
\bibinfo{author}{F. \surnamestart Bartels\surnameend} (\bibinfo{year}{2004}):
  \emph{\bibinfo{title}{On Generalised Coinduction and Probabilistic
  Specification Formats}}.
\newblock Ph.D. thesis, \bibinfo{school}{VU University Amsterdam}.

\bibitemdeclare{article}{BIM95}
\bibitem{BIM95}
\bibinfo{author}{B. \surnamestart Bloom\surnameend}, \bibinfo{author}{S.\
  \surnamestart Istrail\surnameend} \& \bibinfo{author}{A.R. \surnamestart
  Meyer\surnameend} (\bibinfo{year}{1995}): \emph{\bibinfo{title}{Bisimulation
  can't be traced}}.
\newblock {\sl \bibinfo{journal}{J. ACM}} \bibinfo{volume}{42}, pp.
  \bibinfo{pages}{232--268}, \doi{10.1145/200836.200876}.

\bibitemdeclare{article}{BG96}
\bibitem{BG96}
\bibinfo{author}{R. \surnamestart Bol\surnameend} \&
  \bibinfo{author}{J.F. \surnamestart Groote\surnameend}
  (\bibinfo{year}{1996}): \emph{\bibinfo{title}{The meaning of negative
  premises in transition system specifications}}.
\newblock {\sl \bibinfo{journal}{J. ACM}} \bibinfo{volume}{43}, pp.
  \bibinfo{pages}{863--914}, \doi{10.1145/234752.234756}.

\bibitemdeclare{unpublished}{DGL15a}
\bibitem{DGL15a}
\bibinfo{author}{P.R. \surnamestart D'Argenio\surnameend},
  \bibinfo{author}{D. \surnamestart Gebler\surnameend} \&
  \bibinfo{author}{M.D. \surnamestart Lee\surnameend}
  (\bibinfo{year}{2015}): \emph{\bibinfo{title}{A general SOS theory for the
  specification of probabilistic transition systems}}.
\newblock
  \urlprefix\url{http://www.cs.famaf.unc.edu.ar/~lee/publications/DArgenioGeblerLee.pdf}.
\newblock \bibinfo{note}{Submitted}.

\bibitemdeclare{inproceedings}{DL12}
\bibitem{DL12}
\bibinfo{author}{P.R. \surnamestart D'Argenio\surnameend} \&
  \bibinfo{author}{M.D. \surnamestart Lee\surnameend}
  (\bibinfo{year}{2012}): \emph{\bibinfo{title}{Probabilistic Transition System
  Specification: Congruence and Full Abstraction of Bisimulation}}.
\newblock In: {\sl \bibinfo{booktitle}{Proc.~FoSSaCS'12}}, {\sl
  \bibinfo{series}{LNCS}} \bibinfo{volume}{7213},
  \bibinfo{publisher}{Springer}, pp. \bibinfo{pages}{452--466},
  \doi{10.1007/978-3-642-28729-9\_30}.

\bibitemdeclare{article}{DTW12}
\bibitem{DTW12}
\bibinfo{author}{P.R. \surnamestart D'Argenio\surnameend},
  \bibinfo{author}{P. \surnamestart S{\'{a}}nchez Terraf\surnameend} \&
  \bibinfo{author}{N. \surnamestart Wolovick\surnameend}
  (\bibinfo{year}{2012}): \emph{\bibinfo{title}{Bisimulations for
  non-deterministic labelled Markov processes}}.
\newblock {\sl \bibinfo{journal}{Mathematical Structures in Computer Science}}
  \bibinfo{volume}{22}(\bibinfo{number}{1}), pp. \bibinfo{pages}{43--68},
  \doi{10.1017/S0960129511000454}.

\bibitemdeclare{article}{Fok00}
\bibitem{Fok00}
\bibinfo{author}{W. \surnamestart Fokkink\surnameend} (\bibinfo{year}{2000}):
  \emph{\bibinfo{title}{Rooted Branching Bisimulation as a Congruence}}.
\newblock {\sl \bibinfo{journal}{J. Comput. Syst. Sci.}}
  \bibinfo{volume}{60}(\bibinfo{number}{1}), pp. \bibinfo{pages}{13--37},
  \doi{10.1006/jcss.1999.1663}.

\bibitemdeclare{article}{FV98}
\bibitem{FV98}
\bibinfo{author}{W. \surnamestart Fokkink\surnameend} \&
  \bibinfo{author}{C. \surnamestart Verhoef\surnameend}
  (\bibinfo{year}{1998}): \emph{\bibinfo{title}{A Conservative Look at
  Operational Semantics with Variable Binding}}.
\newblock {\sl \bibinfo{journal}{Inf. Comput.}}
  \bibinfo{volume}{146}(\bibinfo{number}{1}), pp. \bibinfo{pages}{24--54},
  \doi{10.1006/inco.1998.2729}.

\bibitemdeclare{article}{vG04}
\bibitem{vG04}
\bibinfo{author}{R.J. \surnamestart van Glabbeek\surnameend}
  (\bibinfo{year}{2004}): \emph{\bibinfo{title}{The meaning of negative
  premises in transition system specifications {II}}}.
\newblock {\sl \bibinfo{journal}{J. Log. Algebr. Program.}}
  \bibinfo{volume}{60-61}, pp. \bibinfo{pages}{229--258},
  \doi{10.1016/j.jlap.2004.03.007}.

\bibitemdeclare{article}{Gro93}
\bibitem{Gro93}
\bibinfo{author}{J.F. \surnamestart Groote\surnameend}
  (\bibinfo{year}{1993}): \emph{\bibinfo{title}{Transition system
  specifications with negative premises}}.
\newblock {\sl \bibinfo{journal}{Theoretical Computer Science}}
  \bibinfo{volume}{118}(\bibinfo{number}{2}), pp. \bibinfo{pages}{263--299},
  \doi{10.1016/0304-3975(93)90111-6}.

\bibitemdeclare{article}{GV92}
\bibitem{GV92}
\bibinfo{author}{J.F. \surnamestart Groote\surnameend} \&
  \bibinfo{author}{F.W. \surnamestart Vaandrager\surnameend}
  (\bibinfo{year}{1992}): \emph{\bibinfo{title}{Structured Operational
  Semantics and Bisimulation as a Congruence}}.
\newblock {\sl \bibinfo{journal}{Inf. Comput.}}
  \bibinfo{volume}{100}(\bibinfo{number}{2}), pp. \bibinfo{pages}{202--260},
  \doi{10.1016/0890-5401(92)90013-6}.

\bibitemdeclare{article}{HPS11}
\bibitem{HPS11}
\bibinfo{author}{H. \surnamestart Hermanns\surnameend},
  \bibinfo{author}{A. \surnamestart Parma\surnameend},
  \bibinfo{author}{R. \surnamestart Segala\surnameend},
  \bibinfo{author}{B. \surnamestart Wachter\surnameend} \&
  \bibinfo{author}{L. \surnamestart Zhang\surnameend}
  (\bibinfo{year}{2011}): \emph{\bibinfo{title}{Probabilistic Logical
  Characterization}}.
\newblock {\sl \bibinfo{journal}{Inf. Comput.}}
  \bibinfo{volume}{209}(\bibinfo{number}{2}), pp. \bibinfo{pages}{154--172},
  \doi{10.1016/j.ic.2010.11.024}.

\bibitemdeclare{inproceedings}{LT05}
\bibitem{LT05}
\bibinfo{author}{R. \surnamestart Lanotte\surnameend} \&
  \bibinfo{author}{S. \surnamestart Tini\surnameend}
  (\bibinfo{year}{2005}): \emph{\bibinfo{title}{Probabilistic Congruence for
  Semistochastic Generative Processes}}.
\newblock In: {\sl \bibinfo{booktitle}{Foundations of Software Science and
  Computational Structures, 8th International Conference, {FOSSACS} 2005.}},
  pp. \bibinfo{pages}{63--78}, \doi{10.1007/978-3-540-31982-5\_4}.

\bibitemdeclare{article}{LT09}
\bibitem{LT09}
\bibinfo{author}{R. \surnamestart Lanotte\surnameend} \&
  \bibinfo{author}{S. \surnamestart Tini\surnameend}
  (\bibinfo{year}{2009}): \emph{\bibinfo{title}{Probabilistic bisimulation as a
  congruence}}.
\newblock {\sl \bibinfo{journal}{{ACM} Trans. Comput. Log.}}
  \bibinfo{volume}{10}(\bibinfo{number}{2}), \doi{10.1145/1462179.1462181}.

\bibitemdeclare{article}{LS91}
\bibitem{LS91}
\bibinfo{author}{K.G. \surnamestart Larsen\surnameend} \&
  \bibinfo{author}{A/ \surnamestart Skou\surnameend} (\bibinfo{year}{1991}):
  \emph{\bibinfo{title}{Bisimulation through Probabilistic Testing}}.
\newblock {\sl \bibinfo{journal}{Inf. Comput.}}
  \bibinfo{volume}{94}(\bibinfo{number}{1}), pp. \bibinfo{pages}{1--28},
  \doi{10.1016/0890-5401(91)90030-6}.

\bibitemdeclare{inproceedings}{LV15}
\bibitem{LV15}
\bibinfo{author}{M.D. \surnamestart Lee\surnameend} \&
  \bibinfo{author}{E. \surnamestart {de Vink}\surnameend}
  (\bibinfo{year}{2015}): \emph{\bibinfo{title}{Rooted branching bisimulation
  as a congruence for probabilistic transition systems}}.
\newblock In: {\sl \bibinfo{booktitle}{Proc.~QAPL'15}},
  \bibinfo{series}{EPTCS}.
\newblock \bibinfo{note}{To appear}.

\bibitemdeclare{inproceedings}{LGD12}
\bibitem{LGD12}
\bibinfo{author}{M.D. \surnamestart Lee\surnameend},
  \bibinfo{author}{D. \surnamestart Gebler\surnameend} \&
  \bibinfo{author}{P.R. \surnamestart D'Argenio\surnameend}
  (\bibinfo{year}{2012}): \emph{\bibinfo{title}{Tree Rules in Probabilistic
  Transition System Specifications with Negative and Quantitative Premises}}.
\newblock In: {\sl \bibinfo{booktitle}{Proc.~EXPRESS/SOS'12}}, {\sl
  \bibinfo{series}{EPTCS}}~\bibinfo{volume}{89}, pp. \bibinfo{pages}{115--130},
  \doi{10.4204/EPTCS.89.9}.

\bibitemdeclare{book}{Mil89}
\bibitem{Mil89}
\bibinfo{author}{R. \surnamestart Milner\surnameend} (\bibinfo{year}{1989}):
  \emph{\bibinfo{title}{Communication and Concurrency}}.
\newblock \bibinfo{publisher}{Prentice-Hall}.

\bibitemdeclare{article}{MRG07}
\bibitem{MRG07}
\bibinfo{author}{M.R. \surnamestart Mousavi\surnameend},
  \bibinfo{author}{M.A. \surnamestart Reniers\surnameend} \&
  \bibinfo{author}{J.F. \surnamestart Groote\surnameend}
  (\bibinfo{year}{2007}): \emph{\bibinfo{title}{{SOS} formats and meta-theory:
  20 years after}}.
\newblock {\sl \bibinfo{journal}{Theor. Comput. Sci.}}
  \bibinfo{volume}{373}(\bibinfo{number}{3}), pp. \bibinfo{pages}{238--272},
  \doi{10.1016/j.tcs.2006.12.019}.

\bibitemdeclare{techreport}{Plo04}
\bibitem{Plo04}
\bibinfo{author}{G. \surnamestart Plotkin\surnameend}
  (\bibinfo{year}{1981}): \emph{\bibinfo{title}{A structural approach to
  operational semantics}}.
\newblock \bibinfo{type}{Report} \bibinfo{number}{DAIMI FN-19},
  \bibinfo{institution}{Aarhus University},
  \doi{doi:10.1016/j.jlap.2004.05.001}.
\newblock \bibinfo{note}{Reprinted in \emph{J. Log. Algebr. Program.},
  60-61:17-139, 2004}.

\bibitemdeclare{phdthesis}{Seg95a}
\bibitem{Seg95a}
\bibinfo{author}{R. \surnamestart Segala\surnameend}
  (\bibinfo{year}{1995}): \emph{\bibinfo{title}{Modeling and Verification of
  Randomized Distributed Real-Time Systems}}.
\newblock Ph.D. thesis, \bibinfo{school}{MIT}.

\bibitemdeclare{article}{Tin10}
\bibitem{Tin10}
\bibinfo{author}{S. \surnamestart Tini\surnameend} (\bibinfo{year}{2010}):
  \emph{\bibinfo{title}{Non-expansive $\epsilon$-bisimulations for
  probabilistic processes}}.
\newblock {\sl \bibinfo{journal}{Theor. Comput. Sci.}}
  \bibinfo{volume}{411}(\bibinfo{number}{22-24}), pp.
  \bibinfo{pages}{2202--2222}, \doi{10.1016/j.tcs.2010.01.027}.

\end{thebibliography}

\end{document}